\def\calT{{\mathcal{T}}}
\def\fb{\textsc{FallBack}}
\def\lp{\textsc{LowPower}}
\def\succ{\textsc{Success}}
\newcommand{\compl}[1]{\overline{#1}} % Complementary (probabilistic) event
\newcommand{\alg}[1]{\textbf{#1}}
\def\Pro{{\mathbb{P}}}
\def\Ex{{\mathbb{E}}}
\title{Towards Tight Bounds for Local Broadcasting}
\author{Magn\'us M. Halld\'orsson and Pradipta Mitra}
\address{ICE-TCS, School of Computer Science\\
 Reykjavik University\\
 101 Reykjavik, Iceland\\}
\email{mmh@ru.is, ppmitra@gmail.com}
\begin{document}
\maketitle 
\begin{abstract}
We consider the \emph{local broadcasting} problem in the SINR model,
which is a basic primitive for gathering initial information among $n$
wireless nodes.  
Assuming that nodes can measure received power, we achieve an essentially optimal constant approximate algorithm (with a $\log^2 n$ additive term). This improves
upon the previous best $O(\log n)$-approximate algorithm. Without power measurement,
our algorithm achieves $O(\log n)$-approximation, matching the previous  best result, but with a simpler approach that works under harsher conditions, such as arbitrary node failures.
We give complementary lower bounds under reasonable assumptions. 
\end{abstract}

\section{Introduction}

% The problem: broadcast/gathering, collecting info on the neighborhood
When a wireless ad-hoc or sensor network starts operating, 
the nodes must form an infrastructure in a distributed manner
without any information about each other.
%the other nodes.
A natural basic primitive is for each node to gather information about all other
nodes in its vicinity. The function to achieve this neighborhood
learning is called local broadcast.

In the local broadcast problem, each wireless node tries to send (the same)
message to all other nodes within a given radius.
The objective is to complete the broadcasts within the shortest
amount of time. This operation is used as a building block in
higher-layer protocols such as  routing, synchronization and
coordination. The time complexity of those protocols are often dominated by
the complexity of the local broadcast operation.

% Distributed model: Unstructured
% - Local control; worst-case distribution of nodes; asynch wakeup;
% - asynch operation; maybe no collision mechanism
The model we use is the unstructured radio model,
which avoids any assumptions of structure or synchronization.
Nodes wake up and shut-down asynchronously, meaning nodes can be switched on at
arbitrary times, including after other nodes have started operating; they can also shut down at some arbitrary point.
There is no global clock to guide the operation of the nodes.
The distribution of the nodes can be arbitrary, possibly in worst-case position.

% Interference model
% - Protocol/graph based; SINR, all-to-all
For wireless algorithms, the model of interference is crucial. Most
work, both on centralized and distributed algorithms, assumes
a graph-based model of interference. The most common graph-based model
is the \emph{protocol} model \cite{kumar00}, where each node has a given transmission
radius within which its messages can reach and be decoded by other
nodes, and a larger interference radius within which its transmission
will disturb (and make it impossible to decode) other messages.
More recently, the \emph{physical} model, or SINR-model, which has
been most commonly used in the engineering literature, has 
received attention in algorithms research. It has been shown to be
more faithful to reality, both experimentally and theoretically \cite{MaheshwariJD08,Moscibroda2006Protocol}.
Here, interference fades
slowly with distance, and it adds up. It is neither binary,
symmetric, nor local, all of which combine to complicate analysis of
SINR algorithms.

% The local broadcast in the SINR model was introduced by Goussevskaia
% et al ~\cite{Goussevskaia2008Local}, which gave an algorithm in the
% same unstructured radio model whose time complexity is within $\log^3
% n$ factor of optimal. 

\subsection{Our contributions}
We seek to resolve the exact complexity of the local broadcast problem,
both with upper and lower bounds.

We give a randomized distributed algorithm that achieves close to
optimal time complexity.  For a node $x$, let $N_x$ be the number of
nodes that are reachable from $x$ in the case of no interference.  
The algorithm completes the broadcast for each node $x$ within
$O(N_x \log n  + \log^2 n)$ slots, with high probability. We do not need a carrier-sense
or collision awareness mechanism for this result (i.e., nodes have no information about the state of activity in the channel except possible reception of a message).
This matches the recently achieved results by Yu \emph{et al} \cite{Yu12}.
Our algorithm is, however, simpler. It also can operate under harsher conditions
compared to \cite{Yu12} --- with asynchronous shutting-down of nodes, the algorithm in \cite{Yu12} may fail since it depends on a network of ``leaders" to coordinate transmission decisions. 

We then provide an algorithm running in optimal $O(N_x + \log^2 n)$ time.
It operates in the same harsh model as the previous one, but assumes
that nodes receive acknowledgements for free, i.e., if they manage to
broadcast to all nodes in their broadcast region, then an acknowledgment
will be returned (in fact, it is enough if this happens with some constant probability). We show that sufficient acknowledgments can be implemented if we relax the restriction of no collision
awareness. Namely, if nodes have a ``carrier-sense'' mechanism that
allows them to verify if received signal is above a certain fixed
universal threshold, that suffices to deduce that a broadcast was successful.
Previously, no better bound was known for the case of a carrier-sense mechanism.
%The previous best bound known assuming carrier-sense was a $O(\log n)$-approximation, due to \cite{Yu11}.

For a lower bound, we show that the term $\log^2 n$ is
necessary, under some assumptions. 
Instead of the SINR model, we prove the lower bound for the protocol model.
We also prove the lower bound, not for completely general algorithms, but for  ``input-determined'' algorithms, where the
behavior of the algorithm is a (random) function of the messages so far received. Though not completely general, this class
intuitively captures most reasonable algorithms possible for this problem.

Regarding the other term, a $\Omega(N_x)$ lower bound is immediate. 
There is evidence that no algorithm can work in time $o(N_x \log n)$, unless
nodes can receive information about the success of their transmissions,
but we do not have a formal proof of this.
%, however. 

Our results serve as further indication that the physical model is not
significantly more demanding computationally than the protocol mode,
at least for problems with uniformly sized neighborhoods like the
local broadcast problem.

% The problem is to achieve local broadcasting: every node must
% communicate once to everyone in it's neighborhood. The ``neighborhood"
% is (essentially) the region to which $x$ can successfully transmit, in
% the absence of interference (but noise is still present). 

\subsection{Related Work}

The local broadcast problem in the SINR model was introduced in
\cite{Goussevskaia2008Local}. The authors gave two randomized
distributed algorithms, both for the asynchronous unstructured radio model.
One is 
a simple Aloha-like protocol that applies in the case of ``known
competition'', i.e., when each node knows the number of nodes in its
proximity. The other, more involved, protocol holds without knowledge
of the competition (``unknown
competition''). The time complexity of the algorithms is
$O(N_x \log n)$ and $O(N_x \log^3 n)$, respectively, where $N_x$ is
the maximum number of nodes in any transmission range.

The bounds for unknown competition were improved in \cite{Yu11}
to $O(N_x \log^2 n)$, optimizing the algorithm of \cite{Goussevskaia2008Local}. 
Additionally, $O(\log n)$-approximate deterministic algorithms 
were given for a synchronized model where a
carrier-sense primitive was assumed to be available. 
Finally, an $O(\log n)$-approximate randomized algorithm without a
carrier-sense primitive was very recently proposed in \cite{Yu12}.
A lower bound of $\Omega(N_x + \log n)$ was also given.

Our first algorithm (without carrier-sense) thus matches the result of \cite{Yu12}, but has certain advantages.
The algorithm from \cite{Yu12} computes a maximal independent set as a set of leaders, which help other nodes to coordinate in an efficient manner. In contrast, our algorithms are simpler. They are variations of the original algorithm of \cite{Goussevskaia2008Local}, requiring no leader election phase. 
This approach has advantages in particularly harsh environments. Assuming nodes can shut-down arbitrarily (in which case no guarantee need be made about their success in local broadcasting), a leader based algorithm is undesirable. For example, in \cite{Yu12}, once a newly awaken node chooses a leader to attach itself to, it uses that leader for all future contention resolution purposes. This
would fail if a leader were to shut down in the meantime. No such problem afflicts our algorithms.

%%% Alon:1989:CRC:73007.73033,
Local broadcasting is related to the radio broadcasting problem in more classical models \cite{Alon:1991:LBR:114567.114569,Kowalski04timeof,bary87}, to initialization and wake-up problems in wireless networks \cite{Jurdzinski02probabilisticalgorithms,mobicom} as well as coloring problems on disc graphs \cite{MoWa05,Graf94oncoloring}.

Recently, the SINR model has received considerable attention in the
algorithms community, starting with the work of Moscibroda and
Wattenhofer \cite{MoWa06}.
Constant-approximation factors are now known for
capacity problems, both with fixed power \cite{GHWW09,SODA11} and power
control \cite{KesselheimSODA11}. See the survey of \cite{GPW10}.
Distributed algorithms have been
given for dominating sets \cite{ScheidelerRS08}, scheduling \cite{KV10,icalp11},
coloring \cite{YWHLa11}, and connectivity and capacity \cite{PODC12}.

\section{Model}

The problem is informally as follows.
Given is a set $V$ of $n$ nodes in the plane. 
Each node wants to transmit a (single) message to all nodes within its
\emph{broadcast range} in the shortest amount of time. A local
broadcast operation is successful if all nodes have performed a successful
local broadcast.

% Topology, knowing n; wake up
We assume that nodes can wake up at any time asynchronously.  
The nodes are unaware of the network topology, which can be of
arbitrary (worst-case) layout. 
The nodes only have a crude bound on
$n$ (up to a polynomial factor). Without such a bound it is known that 
no sublinear algorithms are possible \cite{Jurdzinski02probabilisticalgorithms}.

% Synchronization, slots
There is no global clock or any synchronization among the nodes. 
In the analysis, we assume that time is divided into time-slots; 
this is justified by a standard trick of relating slotted
vs.\ unslotted Aloha (see \cite{Goussevskaia2008Local}).

In this paper, all nodes use the same power $P$, known as the \emph{uniform}
power scheme. 
We scale values so that $P=1$.
%We assume without loss of generality that $P = 1$.

% SINR
We adopt the SINR model of interference, a non-transmitting node
$v$ will successfully receive a message transmitted by node $u$ if,
\begin{equation}
\frac{P/d(u, v)^{\alpha}}{N + \sum_{w \in S \setminus \{u\}} P/d(w, v)^{\alpha}} \geq \beta\ ,
\label{eqn:sinr}
\end{equation}
where $N$ is the ambient \emph{noise}, $\beta$ is the required SINR level,
$\alpha > 2$ is the so-called path loss constant, 
$d(u,v)$ is the distance between two points $u$ and $v$,
and $S$ is  the set of senders transmitting simultaneously. 

For any subset $X$ of the plane, we use the notation $|X|$ to define the number of nodes in $X$.

We need the following two definitions:
\begin{defn}
The \emph{transmission region} $T_x$ is the ball of some fixed radius ($R_T$) around a node $x$ which $x$ can reach without any other node transmitting (i.e., $R_T =\frac1{(N \beta)^{1/\alpha}}$). Clearly, $N_x = |T_x|$.
\end{defn}

Since the signal quality (even without interference) becomes very poor near the boundaries of $T_x$, to achieve non-trivial results, one needs to define the broadcasting region as somewhat smaller than $T_x$:
\begin{defn}
The \emph{broadcasting region} $B_x$  is a ball of some fixed radius ($R_B$) around any node $x$,
containing all nodes to which $x$ would like to transmit.
We set $R_B = \phi R_T$ for a small constant $\phi$ ($\phi = \frac16$ suffices).
\end{defn}

We will use the notation $2 B_x$ to mean the ball of radius $2 R_B$ around $x$.
A probabilistic event is said to happen \emph{whp} (with high probability) if it happens with probability $1 - 1/n^c$, for some $c \ge 1$.

We assume that a node is not obliged to broadcast to nodes that woke
up after itself (or have shut-down before it broadcasts). 
This is consistent with the algorithm of
\cite{Goussevskaia2008Local}, even if not made explicit. No guarantees are made for nodes that ``live" for too short a period of time (i.e., the time elapsed between wake-up and shut-down is smaller than the claimed running time for the algorithm).

We now define formally the \emph{local broadcast} operation.
A node $x$ is successful in a given time slot if it transmits
a message and all nodes within $B_x$ can decode the message,
satisfying Eqn.~\ref{eqn:sinr}.
A local broadcast operation is successful when all the nodes have
become successful. The time complexity of a node $x$ is measured in
terms of the time that elapsed from waking up until the node halts the algorithm,
and is evaluated as a function of $N_x = |T_x|$.

\section{Results}
\begin{theorem}
There exists an algorithm for which the following holds \emph{whp}: each node $x$ successfully performs a local broadcast within $O(N_x \log n + \log^2 n)$ slots.
\label{mainth1}
\end{theorem}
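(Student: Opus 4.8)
The plan is to analyze a \emph{decay}-type variant of the Goussevskaia et al.\ protocol. Measuring time from its own wake-up, node $x$ transmits in slot $t$ independently with a probability $p(t)$ that decreases geometrically, holding $\Theta(\log n)$ consecutive slots at each level $2^{-k}$ before halving (so $p(t) = \Theta(\log n / t)$ and each level recurs long enough to apply a Chernoff bound). The ``tuned'' regime for $x$ is $p(t) \approx 1/N_x$, reached around $t \approx N_x \log n$. The whole argument rests on showing that, once $x$ is in this regime, every slot in which $x$ transmits is successful with constant probability; everything else is counting.

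First I would reduce Eqn.~\ref{eqn:sinr} to an \emph{interference budget}. For any receiver $y \in B_x$ we have $d(x,y) \le R_B = \phi R_T$, so the signal is at least $R_B^{-\alpha} = \phi^{-\alpha} N\beta$ (using $R_T^\alpha = 1/(N\beta)$). Hence the SINR condition holds as soon as the aggregate interference satisfies $I(y) \le (\phi^{-\alpha}-1)N =: cN$, a fixed constant multiple of the noise; the slack factor $\phi=\tfrac16$ is precisely what buys this budget. The task thus becomes: bound $I(y)$ by $cN$ simultaneously at all $y \in B_x$ with constant probability.

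The core is then an expected-interference estimate by annulus decomposition. Partition the plane around $y$ into rings of width $\Theta(R_B)$; ring $j$ (distance $\sim jR_B$) contains $O(j)$ disjoint cells of radius $\Theta(R_B)$, and a node in such a cell contributes $\Theta((jR_B)^{-\alpha})$ per unit transmission probability. The key self-regulation is that a cell holding $s$ nodes forces each of them to a tuned probability $O(1/s)$, since the cell lies inside each such node's own broadcasting region; thus the \emph{total} expected contribution of any cell is $O((jR_B)^{-\alpha})$ regardless of $s$. Summing,
\[
\Ex[I(y)] \;=\; O\!\left(R_B^{-\alpha}\right)\sum_{j\ge 1} j\cdot j^{-\alpha} \;=\; O\!\left(R_B^{-\alpha}\right)\;=\;O(N),
\]
where convergence uses $\alpha>2$. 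Markov's inequality then gives $I(y)\le cN$ with probability $\Omega(1)$ for a fixed $y$, and I would upgrade this to all of $B_x$ by evaluating on a constant-size net and using that interference varies by only a constant factor across the small ball $B_x$. Consequently each tuned slot in which $x$ transmits succeeds with probability $\Omega(1)$, so a tuned slot succeeds with probability $\Omega(1/N_x)$. Over the $\Theta(N_x\log n)$ slots of the tuned regime these trials are independent across slots, yielding $\Omega(\log n)$ expected successes and hence at least one success \emph{whp}; the time to reach the regime is itself $O(N_x\log n)$, while the additive $\log^2 n$ absorbs the warm-up over the $O(\log n)$ higher probability levels (each of length $\Theta(\log n)$) traversed before $x$'s tuned level.

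The main obstacle is that the expectation bound above tacitly assumes every node is already at its tuned probability, whereas under asynchronous wake-up many neighbors are still in earlier, higher-probability (over-transmitting) phases, and worst-case layouts may place a dense cluster just outside $T_x$. The crux of the proof is therefore to bound the aggregate contribution of these not-yet-tuned senders — re-running the annulus sum while charging each over-transmitting node to its current level and showing the totals still telescope to $O(N)$ — and this is exactly where the decay schedule's structure and the $+\log^2 n$ slack are needed. I expect this robustness step, rather than the clean tuned-regime calculation, to carry the technical weight.
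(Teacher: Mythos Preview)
Your proposal has a genuine gap at exactly the point you flag as ``the main obstacle,'' and the approach as stated cannot close it. In a feedback-free decay schedule a node's probability is a function only of its own clock, not of local density; nothing ``forces'' a cell of $s$ nodes to total probability $O(1)$. The adversary controls wake-up times, so it can wake an arbitrarily dense cluster just outside $T_x$ (hence invisible to $N_x$) moments before $x$'s tuned window: those nodes are still at probability $\Theta(1)$ and their aggregate interference at $y\in B_x$ is $\Theta(M\cdot R_T^{-\alpha})$, which swamps your budget $cN$ for any $M$ the adversary chooses. Your annulus bound collapses precisely because the per-cell contribution is \emph{not} $O((jR_B)^{-\alpha})$ but $O((jR_B)^{-\alpha})$ times the cell's total probability, which you have no handle on. (There is also an internal inconsistency: holding each level $2^{-k}$ for $\Theta(\log n)$ slots gives $p(t)=2^{-t/\Theta(\log n)}$, not $\Theta(\log n/t)$; your later accounting of ``$\Theta(N_x\log n)$ tuned slots'' needs the latter, while your $\log^2 n$ warm-up accounting needs the former.)

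The paper's proof takes the opposite tack and this is essential: probabilities \emph{increase} from $\Theta(1/n)$, doubling every $\Theta(\log n)$ slots, and a node that receives $\log n$ messages executes a \textsc{FallBack} that divides its probability by a constant. The heart of the argument is a feedback lemma showing that this mechanism maintains $\sum_{y\in B_x} p_y \le \tfrac12$ in every broadcast region at all times \emph{whp}: if the sum were about to exceed $\tfrac12$, then in the preceding $\Theta(\log n)$ slots enough successful transmissions occur inside $B_x$ to trigger \textsc{FallBack} for every node there. That invariant is what makes your annulus/Markov step go through (and is exactly the content of their Lemma~\ref{lphappens}); the $O(N_x\log n+\log^2 n)$ running time then comes from bounding the number of \textsc{FallBack}s by $O(N_x)$ (each costs $\log n$ receptions, and $T_x$ emits $O(N_x\log n)$ messages total) and telescoping the probability levels between consecutive fallbacks. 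In short, the missing idea is a reception-driven back-off to enforce the per-region probability bound; a pure decay schedule cannot provide it under adversarial asynchronous wake-up.
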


We can improve this result to essentially optimal if we assume that the nodes can measure received power:

\begin{theorem}
Assume that in any slot, a node can measure the power received at its receiver (from all other transmitting nodes).
Then there exists an algorithm for which the following holds \emph{whp}: each node $x$ successfully performs a local broadcast within $O(N_x + \log^2 n)$ slots.
\label{mainth2}
\end{theorem}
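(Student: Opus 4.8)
\emph{Proof idea (sketch).} The plan is to decouple the result into (i)~an algorithm that runs in $O(N_x+\log^2 n)$ slots in an idealized \emph{acknowledgement model}, where a node that broadcasts successfully to all of $B_x$ learns this fact (at least with constant probability) and then halts, and (ii)~an implementation of such acknowledgements from power measurements. The conceptual point I would stress at the outset is \emph{why} acknowledgements can save the $\log n$ factor of Theorem~\ref{mainth1}: without feedback a node must transmit $\Omega(\log n)$ times at a ``good'' probability to be \emph{whp} certain it has succeeded, and since a good probability is $\Theta(1/N_x)$ the node transmits only an $O(1/N_x)$ fraction of slots, forcing $\Omega(N_x\log n)$ slots; with an acknowledgement the node stops at its \emph{first} success, so it suffices to bound the time to first success rather than to drive each node's individual failure probability below $1/n^c$ in isolation.

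First I would observe that a fixed transmission probability does not suffice even with acknowledgements: if every node in $2B_x$ transmits with a fixed $p=\Theta(1/N_x)$ and drops out upon success, the aggregate success rate in the region stays $\Theta(1)$ per slot while contention is high, so a constant fraction is cleared quickly; but no node's per-slot success probability ever exceeds $p=\Theta(1/N_x)$, so the last surviving nodes still need $\Theta(N_x\log n)$ slots \emph{whp}. The fix, and the role of power measurement, is to keep the per-slot success probability $\Theta(1)$ \emph{throughout} by adapting $p$ to the \emph{current} contention: from the measured received power a node estimates the number $\hat m$ of nearby active transmitters (it controls its own $p$, and the expected received power is proportional to $\hat m\,p/R_B^{\alpha}$), and sets $p=\Theta(1/\hat m)$. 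As neighbors succeed and leave, $\hat m$ falls and $p$ rises toward $\Theta(1)$, so even the last straggler enjoys a constant per-slot success probability.

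The core of the proof is then a charging/clearing argument over $2B_x$. First I would establish an interference bound: when active nodes use $p=\Theta(1/\hat m)$, a transmitting node reaches every receiver in $B_x$ with constant probability; this uses $R_B=\phi R_T$ with small $\phi$ so the intended signal dominates, together with the standard summation of SINR interference over annuli, which converges because $\alpha>2$. Given a constant per-slot success probability for the region, I would show that the number of active nodes in $2B_x$ decreases geometrically, so the total number of slots until all of them have succeeded is $O(N_x)$, with the final, low-contention regime (where $p=\Theta(1)$) contributing only an additive $O(\log n)$ \emph{whp} for the stragglers. The remaining additive $\log^2 n$ I would inherit from the bootstrapping needed before a reliable power estimate is available (and to cover small $N_x$ and the crude bound on $n$), run exactly as in the proof of Theorem~\ref{mainth1}. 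Asynchrony I would handle by analyzing only $x$'s own window of $O(N_x+\log^2 n)$ slots and noting that neighbors waking later merely add bounded contention, already absorbed by the adaptive $p$.

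To realize the acknowledgements I would use the power measurements directly: since a node cannot listen while transmitting, I would pair a sensing slot with each transmission slot (or interleave them) and declare success when the power measured in the sensing slot lies below the universal threshold corresponding to interference low enough to guarantee SINR $\ge\beta$ across all of $B_x$; because the contention in the two paired slots is identically distributed, a detected ``clear'' slot certifies a successful transmission with constant probability, which the reduction already tolerates. The step I expect to be the main obstacle is making the clearing argument yield a \emph{per-node whp} bound despite the coupling introduced by the adaptive probabilities and by overlapping, non-aligned regions: I must show that only few nodes ever reach a large $p$ simultaneously (so that their mutual interference cannot trigger a positive-feedback collapse of the success rate), and that estimation errors in $\hat m$ together with occasional false or missed acknowledgements do not cascade. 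Controlling this coupling, uniformly over all $n$ nodes and under arbitrary asynchronous wake-up and shutdown, is where the real work lies.
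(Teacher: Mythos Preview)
Your decomposition into ``acknowledgement model $+$ implementation via power sensing'' matches the paper's framing, but the algorithm you design for the acknowledgement model is substantially different from---and more complicated than---what the paper does. The paper does \emph{not} use the power measurement to estimate contention or to set $p$. It reuses Algorithm~1 verbatim (doubling with message-count--triggered {\fb}) and adds a single line: if {\lp} is detected when the node transmits, halt. The entire analysis of Theorem~\ref{mainth1} then carries over, and the improvement is isolated to one place: the bound on $k$, the number of {\fb}s. In the proof of Lemma~\ref{perf1} the running time was shown to be $O(\log^2 n + k\log n)$ with $k=O(N_x)$, because each of the $N_x$ nodes in $T_x$ transmits $O(\log n)$ times and $x$ falls back once per $\log n$ receptions. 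With the new halt rule, each node in $T_x$ transmits only $O(1)$ times in expectation before {\lp} occurs and it halts; hence $x$ hears only $O(N_x+\log n)$ messages total (whp, by Chernoff), so $k=O((N_x+\log n)/\log n)$, and plugging into $O(\log^2 n + k\log n)$ gives $O(N_x+\log^2 n)$ directly. No new interference analysis, no clearing argument, no contention estimation.

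Your route---infer $\hat m$ from received power and set $p=\Theta(1/\hat m)$---could perhaps be pushed through, but it has a real technical weak point you glossed over: the expected received power is $p\sum_z d(z,x)^{-\alpha}$, which is not proportional to $\hat m\,p/R_B^{\alpha}$ in general (a single close neighbor can dominate many far ones), so power does not estimate the \emph{count} of contenders, only a distance-weighted sum. You would need to argue that setting $p$ to keep the \emph{measured power} (rather than $\hat m$) at a target level still yields constant per-slot success probability, and then handle the fixed-point/coupling issues you already flag. The paper sidesteps all of this by using power only as a binary success detector ({\lp} or not) and letting the message-reception mechanism of Algorithm~1 continue to govern $p$. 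What your approach would buy, if made rigorous, is a more ``direct'' adaptation mechanism; what the paper's approach buys is a two-line delta to an algorithm already analyzed.
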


Finally, a lower bound:
\begin{theorem}
In the protocol model, there exist instances on $n$ vertices such that 
\begin{enumerate}
\item There exists a broadcast neighborhood with a constant number of nodes.
\item No input-determined algorithm can complete local broadcast in this region in $o(\log^2 n)$ slots with high probability.
\end{enumerate}
\label{mainth3}
\end{theorem}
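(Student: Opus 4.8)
The plan is to reduce the lower bound to a single contention gadget and to exploit the defining feature of input-determined algorithms: a node that has received no message transmits in the $t$-th slot after waking with some probability $p_t$ that is a fixed function of $t$ and the node's private coins, and in particular does \emph{not} depend on how many other nodes happen to be present. I would build one target node $x$ together with a single intended receiver $v$ with $d(x,v) < R_T$, so that $B_x = \{v\}$ and $N_x = O(1)$, and then surround $v$ with a set of $k$ \emph{interferers} that all lie within the interference radius $R_I$ of $v$ but outside $T_x$ (so they do not count toward $N_x$). In the protocol model $x$ reaches $v$ in a slot iff $x$ transmits and no interferer transmits, so if every interferer also fires with probability $p_t$, the per-slot success probability of $x$ is exactly $p_t(1-p_t)^k$, and $x$ completes its broadcast \emph{whp} by time $T$ only if $S_k(T) := \sum_{t\le T} p_t(1-p_t)^k = \Omega(\log n)$.

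The hardness comes from choosing $k$ adversarially among the scales $k = 2^j$, $j = 0,1,\dots,\lfloor\log_2 n\rfloor$, so that the instance has $O(1)$ genuine neighbors but an unknown level of competition. The key calculation is that a single schedule cannot serve all scales at once: for every fixed $p$ one has $\sum_{j=0}^{\lfloor\log_2 n\rfloor} p(1-p)^{2^j} = O(1)$, because the summand is $\approx p\,e^{-2^j p}$, which is $\Theta(p)$ for the $O(\log(1/p))$ indices with $2^j \lesssim 1/p$ and decays doubly-exponentially afterwards, so the whole sum is $O(p\log(1/p)) = O(1)$. Summing this over the $T$ slots gives $\sum_{j} S_{2^j}(T) = \sum_{t\le T}\sum_j p_t(1-p_t)^{2^j} = O(T)$.

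With this in hand the theorem follows by averaging. If $T = o(\log^2 n)$ then $\sum_j S_{2^j}(T) = o(\log^2 n)$, so the average of $S_{2^j}(T)$ over the $\Theta(\log n)$ scales is $o(\log n)$, and some scale $j^\star$ has $S_{2^{j^\star}}(T) = o(\log n)$. Since each summand satisfies $p_t(1-p_t)^{k} \le \tfrac1{ek}\le \tfrac12$, the failure probability on the instance with $k=2^{j^\star}$ interferers is $\prod_{t\le T}(1-p_t(1-p_t)^{k}) \ge e^{-2 S_{2^{j^\star}}(T)} = n^{-o(1)}$, which far exceeds $1/n^c$; hence $x$ does not finish \emph{whp}, and this fixed instance witnesses the claim with a constant-size broadcast neighborhood. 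I would package the ``$x$ is oblivious'' step by arranging that across all $\log n$ instances $x$ itself never decodes a message, so that it runs the identical schedule $p_t$ in each and the argument above merely selects the bad scale.

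The main obstacle is controlling the interferers' behavior, since they too run the algorithm and an input-determined interferer changes its schedule the moment it decodes a message. At the critical scale $p_t\approx 1/k$ the event ``exactly one interferer transmits'' has constant probability per slot, so if the interferers were mutually within transmission range they would begin decoding (and possibly coordinating or halting) precisely when $x$ most needs them to stay active; a worst-case algorithm could exploit this to silence them and escape the bound. I would neutralize this with a geometry in which no interferer lies within $R_T$ of any other node---placing the $k$ interferers pairwise farther than $R_T$ apart yet all within $R_I$ of $v$---so that decoding, which requires a sender inside the transmission radius, never occurs at an interferer; consequently every interferer stays in the ``received nothing'' state and keeps firing with probability $p_t$ throughout $[0,T]$. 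This forces $R_I$ to be large relative to $R_T$ so that $\Theta(k)$ such points fit, and one must also keep the interferers genuinely active without letting them complete their own broadcasts. I expect that verifying this configuration is realizable in the protocol model, and ruling out any useful message reception before time $T$ (for instance via a first-reception stopping-time argument), will be the delicate part of the proof.
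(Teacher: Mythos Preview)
Your averaging argument over the scales $k=2^j$ is essentially the same counting argument the paper uses (their Lemma~\ref{fwboundlemma}): both show $\sum_j\sum_{t\le T} p_t(1-p_t)^{2^j}=O(T)$ and then pigeonhole over the $\Theta(\log n)$ scales to find one with $o(\log n)$ mass when $T=o(\log^2 n)$. The bound $\prod_t(1-p_t(1-p_t)^k)\ge n^{-o(1)}$ is also the paper's endgame. So the core of your plan is on target.

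The genuine gap is in how you keep the interferers oblivious. Your proposed fix---place the $k$ interferers pairwise farther than $R_T$ apart yet all inside $B(v,R_I)$---is a packing constraint: at most $O((R_I/R_T)^2)$ such points exist, so for $k$ up to $n$ you need $R_I/R_T=\Omega(\sqrt n)$. That is not the standard protocol model (where $R_I$ is a constant multiple of $R_T$), and the theorem as stated should hold for any fixed ratio. You flag this as ``delicate,'' but it is actually impossible in the standard regime, so the construction as described does not realize the instance.

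The paper sidesteps this entirely. It puts all $\Delta$ interferers \emph{together} in one transmission region (so a constant $R_I/R_T$ suffices) and, instead of preventing reception geometrically, conditions on the event $N_t$ that in every slot so far either $0$ or $\ge 2$ nodes transmitted globally. Since every node lies in every other node's interference region, $\ge 2$ simultaneous transmitters means nobody decodes; hence under $N_{t-1}$ every node still has the all-zero reception history and transmits i.i.d.\ with probability $p_t$. The analysis then lower bounds $\Pro(N_t\mid N_{t-1})$ by upper bounding the probability that \emph{exactly one} of $\Delta$ i.i.d.\ Bernoulli$(p_t)$ trials fires---a quantity that is small both when $p_t\Delta\ll 1$ (Markov) and when $p_t\Delta\gg 1$ (Chernoff), and at most $2/e$ at the critical scale. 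Chaining these gives $\Pro(N_T)\ge n^{-o(1)}$ for $T=o(\log^2 n)$, so no node (in particular none in the constant-size region) has decoded anything. This ``$0$ or $\ge 2$'' conditioning is the missing idea that replaces your geometric separation and makes the bound hold for the standard model.
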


``Input-determined'' algorithms  are defined in Section \ref{lbsec}, where the theorem is proven. Informally these are algorithms whose behavior in a given slot is a (random) function of the messages received in previous slots.

\section{An $O(N_x \log n + \log^2 n)$ time Algorithm}
\label{firstalgo}

In this section, we will prove Theorem \ref{mainth1}.
Our algorithm is listed as Algorithm \ref{alg1} (\alg{LocalBroadcast1}). The  symbols $\gamma, \lambda$ used in the listing are appropriate constants.

\begin{algorithm}
\caption{LocalBroadcast1 (For any node $y$)}   
\label{alg1}                           % and a label for \ref{} commands later in the document
\begin{algorithmic}[1]                    % enter the algorithmic environment
%\LinesNumbered
%\REQUIRE $n \geq 0 \vee x \neq 0$
     \STATE $tp_y \leftarrow 0$
      \STATE $p_y \leftarrow \frac1{4n}$
    \LOOP
      \STATE $p_y \leftarrow \max\{\frac1{128 n}, \frac{p_y}{32}\}$ \label{resetline}
      \STATE $rc_y \leftarrow 0$
     \LOOP
      \STATE $p_y \leftarrow \min\{\frac1{16}, 2 p_y\}$ \label{probincrease}
     \FOR{$j \leftarrow{} 1, 2, \ldots \delta \log n$} \label{innerloop}
       \STATE $s \leftarrow 1$ with probability $p_y$ \label{choosetransmit}
       \IF{s = 1}
        \STATE transmit
       \ENDIF
       \STATE $tp_y \leftarrow tp_y + p_y$
       \IF{$tp_y > \gamma \log n$}  \label{haltcondition}
          \STATE halt;
       \ENDIF
      \IF{message received}
        \STATE $rc_y \leftarrow rc_y + 1$
        \IF{$rc_y > \log n$}
          \STATE goto line \ref{resetline} \label{resettrigger}
        \ENDIF
       \ENDIF
      \ENDFOR
     \ENDLOOP
    \ENDLOOP
\end{algorithmic}
\label{alg1fig}
\end{algorithm}

The intuition behind the algorithm is as follows. The ``right'' probability for $x$ to transmit at is about $\frac1{N_x}$ (too high, and collisions are inevitable; too low, nothing happens). The algorithm starts from a low probability, continuously increasing it, but once it starts receiving messages from others, it uses that as an indication that the ``right''  transmission probability has been reached.
% This algorithm is similar to the algorithm in \cite{Goussevskaia2008Local}, but more optimized.

To prove Thm.~\ref{mainth1}, we will first need the following definition.
\begin{defn}
For any node $x$, the event {\lp} occurs at a time slot if the received power at $x$ from other nodes, $P_x \leq \frac{1}{(4 (\beta + 4) R_B)^{\alpha}}$.
\end{defn}

\noindent The following technical Lemma follows from geometric arguments (see Appendix \ref{sec:missing} for the proof).
\begin{lemma}
If $x$ transmits and {\lp} occurs at $x$, all nodes in $2 B_x$ receive the message from $x$ (thus a successful local broadcast occurs for $x$).
\label{lowpmeanssuccess}
\end{lemma}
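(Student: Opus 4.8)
The plan is to verify the SINR inequality~(\ref{eqn:sinr}) directly at every intended receiver $v$ with $d(x,v) \le 2R_B$, showing that the signal from $x$ dominates noise plus interference. Two facts drive the argument: the signal at such a $v$ is large because $v$ sits well inside the transmission region, and the interference at $v$ is, up to a constant factor, the same as the received power $P_x$ measured at $x$, which is tiny by the {\lp} assumption. I would begin by introducing the interference at $v$ as $\sum_{w \in S \setminus \{x\}} 1/d(w,v)^\alpha$ and comparing it term by term with $P_x = \sum_{w \in S \setminus \{x\}} 1/d(w,x)^\alpha$.

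The first key step is a purely geometric consequence of {\lp}. Since $P_x \le 1/(4(\beta+4)R_B)^\alpha$ is a sum of nonnegative terms bounded by that quantity, each \emph{individual} term is bounded by it as well; hence every simultaneous transmitter $w \ne x$ satisfies $d(w,x) \ge 4(\beta+4)R_B$. In particular all interferers lie far outside $2B_x$, so no node of $2B_x$ other than $x$ is transmitting, and every $v \in 2B_x \setminus \{x\}$ is a legitimate (non-transmitting) receiver to which the SINR condition applies.

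The second step transfers the interference estimate from $x$ to an arbitrary $v \in 2B_x$. For such $v$ we have $d(x,v) \le 2R_B$, so the triangle inequality gives $d(w,v) \ge d(w,x) - 2R_B \ge (1 - \tfrac{1}{2(\beta+4)})\,d(w,x)$, where the last inequality uses the distance bound from the first step. Raising to the power $\alpha$ and summing over all interferers yields $\sum_{w} 1/d(w,v)^\alpha \le c_\alpha P_x$ with $c_\alpha = (1 - \tfrac{1}{2(\beta+4)})^{-\alpha} = (\tfrac{2\beta+8}{2\beta+7})^\alpha$, a constant close to $1$. Thus every node of $2B_x$ sees at most a constant factor more interference than $x$ itself.

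It remains to assemble the inequality, normalizing everything by the signal $1/d(x,v)^\alpha \ge 1/(2R_B)^\alpha$. Since $R_T = R_B/\phi = 6R_B$ and $N\beta = 1/R_T^\alpha$, the noise requirement is $\beta N = 1/(6R_B)^\alpha$, exactly a $3^\alpha$-fraction of the signal, hence at most signal$/9$ as $\alpha > 2$. For the interference, substituting the distance bound together with $c_\alpha$ collapses the geometric constant and the threshold into a single term, giving $\beta\bigl(N + \sum_{w} 1/d(w,v)^\alpha\bigr) \le \text{signal}\cdot\bigl(\tfrac{1}{3^\alpha} + \tfrac{\beta}{(2\beta+7)^\alpha}\bigr)$. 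The main obstacle, and precisely the reason the {\lp} threshold carries the factor $\beta+4$ rather than a bare constant, is to control this last term uniformly in $\beta$: because $2\beta+7 > 1$ and $\alpha > 2$ one may lower the exponent to $2$, and $\beta/(2\beta+7)^2 \le 1/56$ for all $\beta > 0$. Hence the denominator is at most $(\tfrac19 + \tfrac1{56})\cdot\text{signal} < \text{signal}$, so SINR $\ge \beta$ at every $v \in 2B_x$, which establishes the lemma.
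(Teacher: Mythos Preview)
Your argument is correct and follows the same skeleton as the paper's proof: use the {\lp} bound to force every interferer to be at distance at least $4(\beta+4)R_B$ from $x$, apply the triangle inequality to transfer the interference estimate from $x$ to any $v\in 2B_x$ with only a constant-factor loss, and then plug into the SINR inequality together with the noise term $N=\phi^\alpha/(\beta R_B^\alpha)$. The only cosmetic difference is that the paper is content with the cruder ratio $P_v\le (4/3)^\alpha P_x$, whereas you track the $\beta$-dependent constant $((2\beta+8)/(2\beta+7))^\alpha$ and then optimize $\beta/(2\beta+7)^2$ explicitly; both lead to the same conclusion.
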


\noindent We will also need the following definition:
\begin{defn}
A {\fb} event is said to occur for node $y$
%A node $y$ is said to {\fb}  
if line \ref{resettrigger} is executed for $y$.
\end{defn}
%This definition is very close to the definition of \textsc{DIR} in \cite{Goussevskaia2008Local}.

We will refer to the transmission probability $p_y$ for a node $y$ at 
given time slots. This will always refer to the value of $p_y$ in line \ref{choosetransmit}. We first prove a Lemma that bounds the transmission probability in any
broadcast region at a given time.
\begin{lemma}
Consider any node $x$. Then during any time slot $t \leq 10 n^2$, 
\begin{equation}
\sum_{y \in B_x} p_y \leq \frac12
\label{boundedprob}
\end{equation}
with probability at least  $1 - \frac1{n^4}$.
\end{lemma}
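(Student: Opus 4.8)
The plan is to control the first time the sum can exceed the threshold and show this is unlikely, via a self-correction argument. Write $\Sigma_x(t) = \sum_{y \in B_x} p_y$ for the quantity of interest at slot $t$, and let $\tau \le 10n^2$ be the first slot at which $\Sigma_x(\tau) > \frac12$ (if there is no such slot we are done). The first ingredient is a \emph{slow-growth} property: $p_y$ is increased only at line \ref{probincrease}, by a factor $2$ at most once per $\delta \log n$ slots, whereas \fb{} events (line \ref{resettrigger}) and the reset at line \ref{resetline} only decrease it; hence over any window of $\delta\log n$ consecutive slots each $p_y$, and therefore $\Sigma_x$, grows by at most a constant factor. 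Setting $t_0 = \tau - \delta\log n$ and $W = [t_0,\tau]$, this gives $\Sigma_x(t)=\Theta(1)$ on $W$ (concretely $\frac18 < \Sigma_x(t) \le \frac12$ for $t<\tau$, the upper bound because $\tau$ is the first crossing).

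The second ingredient is that a constant sum forces frequent receptions. Fix $z \in B_x$. For any $y \in B_x$ we have $d(y,z)\le 2R_B$, so $z \in 2B_y$; hence by Lemma \ref{lowpmeanssuccess}, if a single $y \in B_x$ transmits and \lp{} holds at $y$, then $z$ receives $y$'s message. Since $\Sigma_x(t) = \Theta(1)$, with constant probability exactly one node of $B_x$ transmits in slot $t$, and --- provided the aggregate interference is under control so that \lp{} holds at that node --- $z$ receives a message with probability at least a constant $c_0 > 0$. A Chernoff bound over the $\delta\log n$ slots of $W$ (for $\delta$ a large enough constant) then shows that whp $z$ receives more than $\log n$ messages during $W$. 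As the counter $rc_z$ is reset only by line \ref{resetline} (i.e.\ by a \fb), node $z$ must trigger line \ref{resettrigger} at least once inside $W$; let $t_z \in W$ be its last \fb.

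The final ingredient is the accounting that yields the contradiction. From $t_z$ to $\tau$ there are no further \fb{}s for $z$ and at most a couple of doublings (as $\tau - t_z \le \delta\log n$), so $p_z(\tau) \le 4\,p_z(t_z^{+}) \le 4\max\{\tfrac1{128n},\,p_z(t_z^{-})/32\}$, the factor $\tfrac1{32}$ coming from line \ref{resetline}. Summing over $z\in B_x$ and using $\max\{a,b\}\le a+b$,
\[
\Sigma_x(\tau) \;\le\; \frac{4\,|B_x|}{128\,n} \;+\; \frac{4}{32}\sum_{z \in B_x} p_z(t_z^{-}).
\]
The first term is at most $\tfrac1{32}$ since $|B_x|\le N_x\le n$. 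For the second, each $p_z(t_z^{-})$ is attained inside $W$, so by slow growth $p_z(t_z^{-})\le 4\,p_z(t_0)$ and $\sum_z p_z(t_z^{-}) \le 4\,\Sigma_x(t_0) \le 2$; thus the right-hand side is strictly below $\frac12$, contradicting $\Sigma_x(\tau) > \frac12$. A union bound over the at most $10n^2$ candidate slots $\tau$ and the $\le n$ nodes $z$ (each Chernoff failure having probability $n^{-\Omega(\delta)}$) then yields the claimed probability $1-1/n^4$.

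The step I expect to be the real obstacle is the reception lower bound of the second paragraph --- showing that a constant local sum guarantees a constant per-slot reception probability at every $z \in B_x$. This requires genuine SINR geometry, namely controlling the aggregate interference at $y$ from nodes outside $B_x$ so that \lp{} holds there, and therefore seems to need (\ref{boundedprob}) to hold \emph{simultaneously in all neighborhoods} up to time $\tau$. I would therefore run the first-crossing argument globally, taking $\tau$ to be the first slot at which \emph{any} region violates (\ref{boundedprob}); interference is then controlled everywhere before $\tau$, and the per-region statement follows. The constant-chasing in the accounting step is loose but must be carried out with the specific values $\tfrac1{128n}$ and $\tfrac1{32}$ in mind, as these are exactly what make the final bound fall below $\frac12$.
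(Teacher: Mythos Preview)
Your proposal is correct and follows essentially the same route as the paper: a first-crossing argument, slow growth of $\Sigma_x$ over the preceding $\delta\log n$ window, a Chernoff bound (via Lemma~\ref{lowpmeanssuccess} and the constant \lp{} probability) forcing every node in $B_x$ to \fb{} during that window, and an accounting step that contradicts the crossing. The only cosmetic difference is the target of the contradiction: the paper bounds $\Sigma_x$ at the \emph{last slot of the window} below $\tfrac14$ (contradicting the lower bound of Eqn.~\ref{probbounds}), whereas you bound $\Sigma_x(\tau)$ itself below $\tfrac12$---both computations go through with the constants $\tfrac1{128n}$ and $\tfrac1{32}$. Your closing observation that the reception step really needs Lemma~\ref{lphappens} and hence requires $\tau$ to be the first \emph{global} violation (over all broadcast regions) is correct and in fact makes explicit something the paper's proof uses tacitly.
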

\begin{proof}
For contradiction, we will upper bound the probability that Eqn.~\ref{boundedprob} is violated for the first time at any given time $t$, after which we will union bound over all $t \leq 10 n^2$.

Let $\calT$ be the interval (time period) $\{t - \delta \log n + 1 \ldots t -1\}$. Then we claim,

\begin{claim}
In each time slot in the period $\calT$,
\begin{equation}
  \frac12 \geq \sum_{y \in B_x} p_y \geq \frac14 \label{probbounds}
\end{equation}
\end{claim}
\begin{proof}
The first inequality is by the assumption that $t$ is the first slot when Eqn.~\ref{boundedprob} is violated. The second is because probabilities (at most) double once every $\delta \log n$ slots (by the description of the algorithm).
\end{proof}

We now show that Eqn.~\ref{probbounds} is not possible. To that end, we show that in the $\delta \log n$ interval preceding $t$, a {\fb} will occur with high probability:

\begin{claim}
With probability $1 - \frac1{n^8}$, each node $z \in B_x$ will {\fb} once in the period $\calT$.
\label{fallbackoccurs}
\end{claim}

\begin{proof}
Fix any $z \in B_x$.
By the algorithm 
\begin{equation}
p_z \leq \frac1{16} \label{yupperbound}  
\end{equation}

Thus, at any time slot,
\begin{equation}
\Pro(z  \text{ does not transmit}) \geq \frac{15}{16}
\label{ydoesnttransmit}
\end{equation}

Now, combining Eqn.~\ref{yupperbound}  and Eqn.~\ref{probbounds}, and defining
$B = B_x \setminus \{z\}$,
\begin{equation}
  \sum_{y \in B} p_y \geq \frac3{16}
  \label{highprobothers}
\end{equation}

For $y \in B_x$ define $\succ_y$ to be the event that $y$ transmits and {\lp} occurs for $y$. By Lemma \ref{lowpmeanssuccess}, $\succ_y$ implies that $z$ will receive the message from $y$.
Thus, the probability of $z$ receiving a message from some node in $B$ in a given round is at least $\frac{15}{16}\Pro(\bigcup\limits_{y\in B}\succ_y)$.

We claim that for any $y \neq w$ (both in $B)$, the events $\succ_y$ and $\succ_w$ are disjoint. This is implicit in Lemma \ref{lowpmeanssuccess}, since $\succ_y$ means that $w$ cannot be transmitting and vice-versa. Thus, the probability of $z$ receiving a message from some node in $B$ is at least:
% \begin{align*}
% & \Pro(\bigcup\limits_{y\in B}\succ_y)  = \sum\limits_{y\in B} \Pro(\succ_y) 
% \geq \sum\limits_{y\in B} p_y \frac12 \left(\frac14\right)^{\frac18 O\left(\frac1{\phi^2}\right)} \\
% \geq & \frac12 \left(\frac14\right)^{\frac18 O\left(\frac1{\phi^2}\left)} \frac1{16}
% \end{align*}

\begin{align*}
\lefteqn{\frac{15}{16} \Pro(\bigcup\limits_{y\in B}\succ_y)  = \frac{15}{16} \sum\limits_{y\in B} \Pro(\succ_y)} \\
& \geq \frac{15}{16} \sum\limits_{y\in B} p_y \frac12 \left(\frac14\right)^{\frac12 O\left(\frac1{\phi^2}\right)} 
\geq \frac{15}{32} \left(\frac14\right)^{\frac12 O\left(\frac1{\phi^2}\right)} \frac3{16}\ , 
\end{align*}
where we use Lemma \ref{lphappens} for the first inequality (stated and proved in Appendix \ref{sec:missing}) and Eqn.~\ref{highprobothers} for the last.

Setting $\delta \geq \frac{10}{\frac{15}{32} \left(\frac14\right)^{\frac12 O(\frac1{\phi^2})} \frac3{16}}$ and using the Chernoff bound, we can show that $z$ will receive $> \log n$ messages in $\calT$ with probability $1 - \frac1{n^8}$, thus triggering the {\fb}.
\end{proof}

Now we show that the above claim implies that Eqn.~\ref{probbounds} is not possible.
\begin{claim}
There exists a time slot in $\calT$ such that \\$\sum_{y \in B_x} p_y < \frac14$.
\end{claim}
\begin{proof}
For any $y \in B_x$, let $p_y^1$ be the value of $p_y$ in the first slot of $\calT$. Let $p_y^f$ be the value of $p_y$ in the slot when {\fb} happened for $y$. Since probabilities can at most double during $\calT$,
\begin{equation}
\sum_{y \in B_x} p_y^f \leq 2 \sum_{y \in B_x} p_y^1 \leq 1\ ,
\label{pfubound}
\end{equation}
the last inequality using the fact that $\sum_{y \in B_x} p_y^1 \leq \frac12$ (Eqn.~\ref{probbounds}).

Now by lines \ref{resetline} and \ref{probincrease} of the algorithm,
in the slot after {\fb}, $p_y = \max\{\frac{1}{128n},
\frac{p_y^f}{32}\} \leq \frac{1}{128n} + \frac{p_y^f}{32}$. Since
probabilities at most double during $\calT$, the value of $p_y$ at the
final slot of $\calT$ is at most $\frac{1}{64n} +
\frac{p_y^f}{16}$. Summing over all $y$, during the final slot of
$\calT$,
% Now by lines \ref{resetline} and \ref{probincrease} of the algorithm,
% in the slot after {\fb}, $p_y = \max\{\frac{1}{64n},
% \frac{p_y^f}{16}\} \leq \frac{1}{64n} + \frac{p_y^f}{16}$. Since
% probabilities at most double during $\calT$, the value of $p_y$ at the
% final slot of $\calT$ is at most $\frac{1}{32n} +
% \frac{p_y^f}{8}$. Summing over all $y$, during the final slot of
% $\calT$,

\begin{align*}
\sum_{y \in B_x} p_y  \leq \frac{n}{32n} + \sum_{y \in B_x} \frac{p_y^f}{8}
\leq \frac1{32} + \frac18 < \frac14
\end{align*}
contradicting Eqn.~\ref{probbounds}. We used Eqn.~\ref{pfubound} in the second
inequality.
\end{proof}

The proof of the Lemma is completed by union bounding over time slots 
$t \leq 10 n^2$.
\end{proof}

\noindent Now we prove that nodes stop running the algorithm by a certain time.
\begin{lemma}
Each node $x$ stops executing within $O(N_x \log n + \log^2 n)$ slots, whp.
\label{perf1}
\end{lemma}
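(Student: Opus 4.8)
The plan is to count the slots until the halting condition on line~\ref{haltcondition} fires, i.e.\ until the accumulated transmission probability $tp_x$ first exceeds $\gamma \log n$. Since $tp_x$ only ever increases (by $p_x$ each slot) and the node stops as soon as $tp_x > \gamma \log n$, we have $tp_x \le \gamma \log n + \tfrac1{16}$ throughout. I would split the slots of $x$ into \emph{ceiling} slots, where $p_x = \tfrac1{16}$, and \emph{sub-ceiling} slots, where $p_x < \tfrac1{16}$. The ceiling slots are immediate: each contributes exactly $\tfrac1{16}$ to $tp_x$, so there can be at most $16(\gamma\log n + 1) = O(\log n)$ of them before the node halts.

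For the sub-ceiling slots I would use a potential argument on $\Phi = \log_2 p_x$. The value of $p_x$ lives in the range $[\tfrac1{64n}, \tfrac1{16}]$, so $\Phi$ stays within a window of width $W = O(\log n)$. Each pass of the inner loop that runs at a sub-ceiling probability raises $\Phi$ by exactly $1$ (line~\ref{probincrease} doubles $p_x$ and no cap applies), whereas each {\fb} event lowers $\Phi$ by at most $5$ (line~\ref{resetline} divides $p_x$ by $32$). Since $\Phi$ is confined to the window, the number of $+1$ steps is at most $W$ plus the total downward movement, i.e.\ the number of sub-ceiling inner loops is at most $O(\log n) + 5F$, where $F$ is the number of {\fb} events experienced by $x$. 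As each inner loop lasts at most $\delta \log n$ slots, the total number of sub-ceiling slots is $O(\log^2 n + F\log n)$.

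It therefore remains to show $F = O(N_x)$, which I expect to be the main obstacle. Each {\fb} requires $rc_x$ to exceed $\log n$ (line~\ref{resettrigger}), and $rc_x$ is reset after every {\fb}, so $F$ is at most the total number of messages $x$ receives divided by $\log n$. Every message $x$ decodes must come from a node that can reach it even absent interference, hence from a node inside $T_x$; there are $N_x$ such nodes. Moreover each node transmits at most $O(\log n)$ times over its whole life whp: its transmissions in distinct slots are independent Bernoulli$(p)$ trials whose total mean is $\sum_t p = tp \le \gamma\log n + 1$, so a Chernoff bound (and a union bound over all $n$ nodes) caps its transmission count at $O(\log n)$. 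Consequently $x$ receives at most $N_x \cdot O(\log n)$ messages, giving $F = O(N_x)$. Combining the three contributions yields a halting time of $O(\log n) + O(\log^2 n + N_x \log n) = O(N_x\log n + \log^2 n)$, with the stated high probability after union-bounding the Chernoff estimates. The delicate points are verifying that the reception count is truly governed by transmissions \emph{within} $T_x$ (an SINR/geometry fact: signals from beyond radius $R_T$ cannot be decoded) and that fallbacks, despite resetting $p_x$, never inflate a node's lifetime transmission count since $tp$ grows monotonically.
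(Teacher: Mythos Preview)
Your argument is correct and is essentially the paper's own proof, just packaged differently: the paper telescopes the differences $t_x(i+1)-t_x(i)$ in terms of $b_i=-\log_2 p_x$ at the {\fb} times (Claim~\ref{tintermsofb}) and separately bounds the pre-first-{\fb} and post-last-{\fb} phases by $O(\log^2 n)$, whereas your potential on $\Phi=\log_2 p_x$ absorbs all of this into a single ``upward movement $\le$ window width $+$ downward movement'' inequality. The bound on the number of ceiling slots via $tp_x$ and the bound $F=O(N_x)$ via the $O(\log n)$-transmission Chernoff argument (the paper's Lemma~\ref{transmissiontimes} and Claim~\ref{kbounded}) are identical in both.
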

\begin{proof}
Fix $x$. We derive four claims that together imply the lemma.

First, by the halting condition
of line \ref{haltcondition}:

\begin{claim}
The number of slots for which $p_x \geq \frac1{32}$ is $O(\log n)$.
\label{pxusuallysmall}
\end{claim}

Assume that $x$ experienced
$k$ {\fb s}.
Consider the times $t_x(1), t_x(2) \ldots t_x(k)$ when
a {\fb}  happened for $x$. Now,
\begin{claim}
$t_x(1) = O(\log^2 n)$. Also, there are $O(\log^2 n)$ slots after $t_x(k)$.
\end{claim}
\begin{proof}
The two claims are very similar. Let us prove the latter one.
Since {\fb} does not occur after $t_x(k)$, the probability doubles every $\delta \log n$ slots. Since the minimum probability is $\Omega(\frac1n)$, by $O(\log^2 n)$ slots, the probability will reach $\frac1{32}$. Once this happens, the algorithm terminates in $O(\log n)$ additional slots, by Claim \ref{pxusuallysmall}.
\end{proof}

Given the above claim it suffices 
to bound $t_x(k) - t_x(1)$. By Claim \ref{pxusuallysmall} we can also restrict ourselves to slots for which $p_x < \frac1{32}$. For these slots,
line $\ref{probincrease}$ does not need the $\min$ clause, i.e., 
$p_y \leftarrow 2 p_y$ each time line $\ref{probincrease}$ is executed. 

Define $b_i$ such that $p_x = \frac1{2^{b_i}}$ at time $t_x(i)$. 
Note that if $n$ is a power of $2$, $b_i$ is always an integer (the case of other values of $n$ can be easily managed).

We can characterize the running time between two {\fb s} as follows.
\begin{claim}
$t_x(i+1) - t_x(i) \leq (b_{i} - b_{i+1} + 5) \delta \log n$, for all $i = 1, 2 \ldots k-1$.
\label{tintermsofb}
\end{claim}
\begin{proof}
During slots in $[t_x(i), t_x(i+1))$, $p_x$ doubles every $\delta \log n$ slots (by the description of the algorithm and the fact that 
$p_x < \frac1{32}$). 
Let $b$ be  such that $p_x = \frac1{2^{b}}$ at time $t_x(i+1) - 1$. Then, 
\begin{align*}
& \frac1{  2^{b}} = \frac{2^{\left\lfloor{\frac{t_x(i+1) - t_x(i)}{\delta \log n}}\right\rfloor}}{2^{b_i}} \\
\Rightarrow\,\,  & b_i - b = \left\lfloor{\frac{t_x(i+1) - t_x(i)}{\delta \log n}}\right\rfloor
\end{align*}
By lines \ref{probincrease} and \ref{resetline} of the algorithm, $b_{i+1} \leq b + 4$, and thus, 
\begin{align*}
b_i - b_{i+1} + 4 & \geq  \left\lfloor{\frac{t_x(i+1) - t_x(i)}{\delta \log n}}\right\rfloor\\ \Rightarrow
b_i - b_{i+1} + 5 & \geq  \frac{t_x(i+1) - t_x(i)}{\delta \log n} \ ,
\end{align*}
completing the proof of the Lemma.
\end{proof}

Thus, the running time $t_x(k) - t_x(1)$ can be bounded by:
\begin{align}
\lefteqn{t_x(k) - t_x(1)}  \nonumber \\ 
& = (t_x(k) - t_x(k-1)) + (t_x(k-1) - t_x(k-2)) \nonumber \\ 
& \qquad \ldots  + (t_x(2) - t_x(1)) \nonumber \\
& \leq  ((b_{k-1} - b_{k} + 5) + (b_{k-2} - b_{k-1} + 5) \nonumber \\
  & \qquad \ldots + (b_{1} - b_{2} + 5)) \delta \log n \nonumber \\
& = (b_1 - b_k + 5 k ) \delta \log n \nonumber \\
& = O(\log^2 n +  k \log n) \ ,\label{mainrunbound1}
\end{align}
where we use Claim \ref{tintermsofb}, the non-negativity of $b_k$ and the fact that $b_i = O(\log n)$ (as $p_x = \Omega(\frac1n)$).

To complete the proof of the Lemma, we need a bound on $k$:
\begin{claim}
$k = O(N_x)$. \label{kbounded}
\end{claim}
\begin{proof}
The total number of possible transmissions that $x$ could possibly hear is $O(N_x \log n)$, whp. 
This
is because each node transmits $O(\log n)$ times, whp (by Lemma \ref{transmissiontimes} in Appendix \ref{sec:missing}) and a node can only hear messages from nodes in $T_x$ (by the definition of $T_x$).
But nodes only {\fb} once for every $\log n$ messages received (by the condition immediately preceding line \ref{resettrigger}). The claim is proven. 
\end{proof}

Applying the above claim to Eqn.~\ref{mainrunbound1}, 
$t_x(k) - t_x(1) \leq O(\log^2 n + k \log n) = O(N_x \log n +\log^2 n)$, completing the argument.
\end{proof}

The final piece of the puzzle is to show that for each node, a successful local broadcast happens whp
during one of its $\Theta(\gamma \log n)$ transmissions. 

\begin{lemma}
By the time a node halts, it has successfully locally broadcast a message, whp.
\label{lem:succ-lb}
\end{lemma}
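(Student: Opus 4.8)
The goal is to show that each node $x$ succeeds at a local broadcast before it halts. By Lemma \ref{lowpmeanssuccess}, it suffices to exhibit a single slot in which $x$ transmits \emph{and} the event {\lp} occurs at $x$, i.e.\ the received power $P_x$ is below the threshold. The plan is to argue that $x$ makes $\Theta(\gamma \log n)$ transmission attempts before halting, and that in each such attempt {\lp} holds with at least a constant probability, so that the probability of \emph{never} seeing a successful (transmit $\wedge$ {\lp}) slot is exponentially small in $\log n$, hence polynomially small in $n$.

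First I would establish the number of transmission attempts. By the halting condition on line \ref{haltcondition}, the node halts only once the accumulated sum $tp_x$ of transmission probabilities exceeds $\gamma \log n$. Since in each slot $x$ transmits independently with probability $p_x$ and adds exactly $p_x$ to $tp_x$, the expected number of actual transmissions by the time of halting is $\Theta(\gamma \log n)$; a Chernoff bound gives that $x$ transmits at least $\Omega(\gamma \log n)$ times whp. (One must be slightly careful since the $p_x$ values and the stopping time are themselves random, but conditioning on the realized sequence of probabilities, the transmit indicators remain independent Bernoulli variables, so the bound goes through.)

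Next I would control the probability that {\lp} holds in a transmission slot of $x$. The key observation is that by the previous Lemma (Eqn.~\ref{boundedprob}), at any slot $t \le 10 n^2$ we have $\sum_{y \in B_x} p_y \le \frac12$ whp, and more generally the expected received power $\Ex[P_x]$ from all other nodes is small because interference sums over nodes weighted by their transmission probabilities, which are bounded in every broadcast region by the first Lemma. Using a Markov-type argument on $P_x$ (the received power), together with the geometric decay $d(w,x)^{-\alpha}$ and the annular summation showing that total expected interference is $O(1)$ times a small constant, I would conclude that $\Pro(\lp \text{ at } x) \ge c_0$ for some absolute constant $c_0 > 0$ in each slot. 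This is essentially the content of the auxiliary Lemma \ref{lphappens} referenced earlier (\emph{Appendix \ref{sec:missing}}), which bounds the probability that received power stays low. Since the transmit decision of $x$ is independent of whether {\lp} holds (the latter depends only on the other nodes' independent coin flips), the joint event (transmit $\wedge$ {\lp}) occurs with probability at least $p_x \cdot c_0$ in each slot.

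Finally I would combine these. Over the $\Omega(\gamma \log n)$ slots in which $x$ actually transmits, each independently has {\lp} with probability $\ge c_0$, so the probability that {\lp} fails in \emph{all} of them is at most $(1 - c_0)^{\Omega(\gamma \log n)} \le n^{-c}$ for $\gamma$ chosen large enough relative to $c_0$ and the desired exponent $c$. Hence whp some transmission of $x$ coincides with {\lp}, which by Lemma \ref{lowpmeanssuccess} yields a successful local broadcast, and a union bound over the $n$ nodes completes the argument. The main obstacle I anticipate is the dependency between the transmissions of $x$ and the interference it sees (both are functions of the same global randomness over time); the clean way to handle this is to condition on the transmission schedule of all nodes \emph{other} than $x$, under which $P_x$ in a fixed slot is determined and independent of $x$'s own coin, and to verify that the constant-probability {\lp} bound holds for the \emph{typical} interference configuration guaranteed whp by the preceding bounded-probability Lemma.
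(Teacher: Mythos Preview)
Your proposal is correct and follows essentially the same route as the paper's proof: count $\Theta(\gamma\log n)$ transmissions via the halting condition on $tp_x$, invoke the constant lower bound on $\Pro(\lp)$ from Lemma~\ref{lphappens} (which in turn rests on the bounded-probability Lemma), combine with Lemma~\ref{lowpmeanssuccess}, and conclude by a Chernoff bound with $\gamma$ taken large enough. If anything, you are more careful than the paper about the dependency issues underlying the concentration step, which the paper's short proof leaves implicit.
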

\begin{proof}
The expected number of transmission made by a node is $\gamma \log n$ (by the algorithm). By Lemmas \ref{lphappens} and \ref{lowpmeanssuccess}, during each such transmission, local broadcast succeeds with probability $\frac12 \left(\frac14\right)^{\frac12 O(\frac1{\phi^2})}$, at least. Thus, the expected number of successful local broadcasts is $\frac12 \left(\frac14\right)^{\frac12 O(\frac1{\phi^2})} \gamma \log n$. Setting $\gamma$ to a high enough constant, and using Chernoff bounds, with high probability, a successful local broadcast happens at least once.
\end{proof}

Lemmas \ref{perf1} and \ref{lem:succ-lb} together imply Thm.~\ref{mainth1}.

\section{Improved Algorithm with \\Received Power  Measurement}
We will assume for this section that nodes can measure total received power from other nodes (even when transmitting). In hardware implementations, the received
power is usually available as RSSI (Received signal strength indicator). Additionally, filtering out one's signal (thus being able to measure received power even when transmitting) is also possible in many hardware implementations.

With this primitive we are able to design an algorithm that completes local 
broadcasting in time $O(N_x + \log^2 n)$, with high probability (thus proving Thm.~\ref{mainth2}).

Our new algorithm (Algorithm \ref{alg2})
is identical to the previous one, except for 
an extra halting condition in line \ref{haltifsuccess} --- 
A node halts if {\lp} happens, which it can clearly measure with the received power measurement primitive discussed above.

\begin{algorithm}[ht]
\caption{LocalBroadcast2 (for any node $y$)}   
\label{alg2}                           % and a label for \ref{} commands later in the document
\begin{algorithmic}[1]                    % enter the algorithmic environment
%\LinesNumbered
%\REQUIRE $n \geq 0 \vee x \neq 0$
     \STATE $tp_y \leftarrow 1$
      \STATE $p_y \leftarrow \frac1{4n}$
    \LOOP
      \STATE $p_y \leftarrow \max\{\frac1{128 n}, \frac{p_y}{32}\}$ \label{resetline2}
      \STATE $rc_y \leftarrow 0$
     \LOOP
      \STATE $p_y \leftarrow \min\{\frac1{16}, 2 p_y\}$ \label{probincrease2}
     \FOR{$j \leftarrow{} 1, 2, \ldots \delta \log n$} \label{innerloop2}
       \STATE $s \leftarrow 1$ with probability $p_y$ \label{choosetransmit2}
       \IF{s = 1}
        \STATE transmit
                \IF{{\lp} occurs} \label{haltifsuccess}
         \STATE halt;
        \ENDIF

       \ENDIF
       \STATE $tp_y \leftarrow tp_y + p_y$ \label{haltcondition2}
       \IF{$tp > \gamma \log n$}
          \STATE halt;
       \ENDIF
      \IF{message received}
        \STATE $rc_y \leftarrow rc_y + 1$
        \IF{$rc_y > \log n$}
          \STATE goto line \ref{resetline} \label{resettrigger2}
        \ENDIF
       \ENDIF
      \ENDFOR
     \ENDLOOP
    \ENDLOOP
\end{algorithmic}
\label{alg2fig}
\end{algorithm}

To show why this leads to the improved bound, 
recall the proof of Lemma \ref{perf1}. In proving that Lemma, we showed in Claim \ref{kbounded} that  $k = O(N_x)$ (where $k$ is the number of {\fb}s for $x$). We will
show instead that for Algorithm \ref{alg2}:

\begin{lemma}
$k = O\left(\frac{N_x + \log n}{\log n}\right)$
\end{lemma}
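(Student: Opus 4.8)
The plan is to re-run the counting argument of Claim~\ref{kbounded}, but to replace the crude bound ``every transmission of a $T_x$-node may be heard'' by one that exploits the new halting rule of line~\ref{haltifsuccess}. As in the proof of Lemma~\ref{perf1}, the starting point is that each \fb{} of $x$ is triggered only after $x$ has received more than $\log n$ messages (the test preceding line~\ref{resettrigger2}), and these reception windows are disjoint because $rc_x$ is reset at every \fb. Hence, if $M_x$ denotes the total number of messages $x$ receives during its execution, then $k \le M_x/\log n$, and it suffices to prove that $M_x = O(N_x + \log n)$ whp.

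The key new structural fact is that in Algorithm~\ref{alg2} a node halts the instant it performs a successful broadcast: by line~\ref{haltifsuccess}, as soon as \lp{} holds for a node while it transmits, it stops, and by Lemma~\ref{lowpmeanssuccess} this transmission is exactly a successful local broadcast. Thus each node performs at most one such broadcast over the entire run. Since $x$ can only decode nodes in $T_x$ (by the definition of $T_x$), the total number of successful broadcasts ever produced by nodes that $x$ could hear is at most $N_x$.

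The heart of the argument is then to charge the messages counted in $M_x$ against these $\le N_x$ successful broadcasts, up to a single additive $O(\log n)$ term. I would first dispose of the ``easy'' receptions: whenever a node $y \in B_x$ transmits during \lp{}, every node of $2B_y \supseteq B_x$ hears it (since $d(x,y)\le R_B$ implies $x\in 2B_y$), so $x$ records this message while $y$ halts; each such $y$ contributes at most one message, for at most $|B_x| \le N_x$ in total. The remaining receptions are those coming from transmissions that were \emph{not} successful broadcasts of their sender, and bounding their number by $O(N_x)$ (plus one terminal window of at most $\log n$ receptions that $x$ may accumulate just before it halts) is the main obstacle. The difficulty is that a message decoded at $x$ is governed by the interference \emph{at $x$}, whereas a sender halts according to the interference \emph{at itself}, and these two quantities need not coincide. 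I would control them with geometric SINR estimates in the spirit of Lemma~\ref{lowpmeanssuccess}, using $d(x,y)\le R_T$ and $R_B = R_T/6$ to argue that a sender $y$ repeatedly decodable at the nearby point $x$ cannot persistently carry high received power, so that all but a lower-order number of the transmissions of $y$ that $x$ decodes either coincide with $y$'s single halting broadcast or occur in dense slots whose count is already absorbed by the $O(N_x)$ budget. Substituting $M_x = O(N_x + \log n)$ into $k \le M_x/\log n$ then yields $k = O\!\left(\frac{N_x + \log n}{\log n}\right)$, and feeding this value of $k$ into Eqn.~\ref{mainrunbound1} recovers the claimed $O(N_x + \log^2 n)$ running time.
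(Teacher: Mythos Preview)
Your reduction to bounding $M_x$, the total number of messages $x$ receives, by $O(N_x + \log n)$ is exactly right, and your observation that each node performs at most one \lp{} transmission before halting is the correct structural fact. The gap is in the charging step for the ``hard'' receptions --- transmissions of some $y\in T_x$ that $x$ decodes but that are \emph{not} $y$'s halting \lp{} slot. Your proposed control (``a sender $y$ repeatedly decodable at $x$ cannot persistently carry high received power'') does not go through: $y$ may be as far as $R_T = 6R_B$ from $x$, and a third node $z$ sitting close to $y$ can make the received power at $y$ arbitrarily large while contributing only modest power at $x$. So decoding at $x$ gives no usable bound on interference at $y$, and the ``geometric SINR estimates in the spirit of Lemma~\ref{lowpmeanssuccess}'' (which relies on $d(x,y)\le 2R_B$) do not transfer. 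As stated, the argument does not bound the hard receptions by anything short of the $O(N_x\log n)$ you are trying to improve.

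The paper sidesteps this mismatch entirely by bounding not $M_x$ but the coarser quantity $T$ = total number of transmissions made by nodes in $T_x$ (clearly $M_x \le T$). The key input is Lemma~\ref{lphappens}: conditional on the probability-sum invariant (which holds whp in every slot), whenever any node transmits, \lp{} occurs at \emph{that node} with a fixed constant probability $c>0$ --- and by line~\ref{haltifsuccess} it then halts. Thus each of the $T$ transmissions is, with probability $\ge c$, its sender's last. Since $T_x$ contains only $N_x$ nodes, at most $N_x$ transmissions can be ``last'' ones; a Chernoff bound then shows that $T > \tfrac{10}{c}N_x + 10\log n$ forces more than $N_x$ halting events whp, a contradiction. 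Hence $T = O(N_x + \log n)$, so $M_x = O(N_x + \log n)$, and $k \le M_x/\log n$ gives the lemma. The point is that Lemma~\ref{lphappens} already refers to interference \emph{at the transmitter}, so no translation between $x$'s and $y$'s received power is ever needed.
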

\begin{proof}
As before, since we {\fb} once for every $\log n$ received messages, it suffices to show that whp, the number of transmissions from $T_x$ that a node will hear is $O(N_x + \log n)$.

By Lemma \ref{lphappens}, for any node $x$ transmitting, {\lp} occurs with a constant probability. Thus, for any given transmission, the number of unhalted nodes in $T_x$ reduces by $1$ with some constant probability $c$. For contradiction, assume nodes in $T_x$ transmit more than $10 \frac1{c} N_x + 10 \log n$ times. Using Chernoff bounds, it is easy to show that, whp, {\lp} will occur for $> N_x$ transmitting nodes, which is a contradiction (since nodes halt once they complete a {\lp} and there are only $N_x$ nodes in $T_x$).
\end{proof}

\section{Lower Bound}
\label{lbsec}
In this section, we prove Thm.~\ref{mainth3}, thus showing that the $O(\log^2 n)$ in the running time may be necessary. 
As indicated, we prove the bound in the
 protocol model of interference \cite{kumar00}. This is a widely used and simpler model 
of wireless interference. In the protocol model, there is a transmission range $R_T$ and interference range $R_I$. A transmission from $x$ to $y$ succeeds if $d(x, y) \leq R_T$ and $d(y, z) > R_I$ for all other transmitting nodes $z$.

The algorithmic result of Section \ref{firstalgo}
 applies to the protocol model as well, i.e., local broadcasting is possible in $O(N_x \log n + \log^2 n)$ time.
Here $N_x$ is the number of nodes in the ball of radius $R_T$ around $x$ (thus, the transmission and broadcast regions are identical). The analysis for the SINR model can be applied naturally to the protocol model.

Though our lower bound result does not apply directly to the SINR model, it does apply to the type of algorithm employed in the paper. The fact that we measure success by the event {\lp} is essentially equivalent to establishing an interference perimeter around nodes. Thus, our lower bound indicates that getting rid of the $O(\log^2 n)$ factor in the SINR model, if at all possible, would have to use different techniques.

We need the following two assumptions:
\begin{enumerate}
\item Nodes do not have any ``carrier sense ability'', thus only external information they can get is a message reception.
\item The algorithm is ``input-determined'', i.e., the action of the algorithm is a function of messages it has received thus far and its own random bits. We define this precisely in Definition \ref{inputdet}.
\end{enumerate}

Note that Algorithm \ref{alg1} is clearly input-determined. Though it is possible to conjure up algorithms that are not, it is difficult to imagine how such an algorithm would help. Closing this gap remains an intriguing open problem.

We start with some definitions.
For any node $x$, and any time $t$, we define a binary function:
\begin{equation*}
I(x, t) = \left\{
\begin{array}{rl}
1 & \text{ if } x \text{ successfully decoded a message},\\
0 & \text{ otherwise.}
\end{array} \right.
\end{equation*}
Assume (without loss of generality) that nodes cannot decode messages in slots where they are transmitting.

Define $I_r(x)$ to be the string containing all bits $I(x, t)$ for $t = 1 \ldots r$. Let $T(x, t)$ define whether or not node $x$ transmits at time $t$.

Now we define precisely what we mean by input-determined:
\begin{defn}
An algorithm is said to be \emph{input-determined} if for  any node $x$ and any time slot $t$,
$$ \Pro(T(x, t) = 1 | I_{t-1}(x) = B, E_x) =  \Pro(T(x, t) = 1 | I_{t-1}(x) = B)$$
for any binary string $B\in \{0,1\}^{t-1}$ and any event $E_x$ that is a function of $(\cup_{y \in [n], r \in [1, t]} T(y, r)) \setminus \{T(x, t)\}$.
This is to say: once the reception history $I_{t-1}(x)$ is known, the behavior of the algorithm does not depend on $E_x$. This implies, 
\begin{align*}
& \Pro(T (x, t) = 1|E_x) \\ = & \sum_{B\in \{0,1\}^{t-1}} \Pro(T(x, t)= & 1 | I_{t-1}(x) = B) \Pro(I_{t-1} (x) = B|E_x)
\end{align*}
\label{inputdet}
\end{defn}

Define the string $\mathbf{0}_t$ to be the string of $t$ zeroes.
Now, consider the behavior of any algorithm. For any $t$, define $p_t$ by
$$p_t = \Pro(T (x, t) = 1|N_{t-1} )$$
for any node $x$ ($x$ is arbitrary as the nodes are indistinguishable), and where $N_{t-1}$ is the event
that in each slot up to $t-1$, there were either $0$ or more than $2$ nodes transmitting in the system.

Assume a construction where each node is in every other node's interference range (but not necessarily in its transmission range). Then
the event $N_{t-1}$ implies that $I_{t-1}(x) = \mathbf{0}_{t-1}$ for all $x$. Our lower bound will present such a construction, and then show that $N_t$ occurs with significant probability for $t = o(\log^2 n)$. Clearly, this means that no messages were decoded, thus local broadcast has not happened for any node. We now claim that the probabilities in a single slot are independent across nodes, or,

\begin{lemma}
Consider an input-determined algorithm and a node $x$. Let $\tilde T(x, t)$ define any arbitrary collection of transmissions at time $t$ by some other nodes in the system. 
Then, $$\Pro(T (x, t)|N_{t-1}, \tilde T (x, t)) = p_t$$
\label{xindependent}
\end{lemma}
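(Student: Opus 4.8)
The plan is to reduce both sides of the claimed identity to the single quantity $\Pro(T(x,t)=1 \mid I_{t-1}(x) = \mathbf{0}_{t-1})$, using the input-determined hypothesis together with the defining property of the construction. The guiding idea is that the reception history $I_{t-1}(x)$ is a sufficient statistic for $x$'s behaviour in slot $t$: once $I_{t-1}(x)$ is fixed, Definition \ref{inputdet} says that conditioning on anything else about the other nodes' transmissions is irrelevant. And under the construction in which every node lies in every other node's interference range, the event $N_{t-1}$ fixes $I_{t-1}(x)$ completely.

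First I would verify that the compound event $E_x := N_{t-1} \cap \tilde T(x,t)$ is admissible in the sense of Definition \ref{inputdet}, i.e. a function of $(\cup_{y\in[n],\,r\in[1,t]} T(y,r))\setminus\{T(x,t)\}$. Indeed, $N_{t-1}$ is determined by the indicators $T(y,r)$ for $r \le t-1$ (it merely counts transmitters in each early slot), and $\tilde T(x,t)$ is by definition a statement about the transmissions of nodes \emph{other than} $x$ in slot $t$; neither refers to $T(x,t)$. Applying the consequence of input-determinedness stated after Definition \ref{inputdet} to this $E_x$ gives
\begin{equation*}
\Pro(T(x,t)=1\mid N_{t-1},\tilde T(x,t)) = \sum_{B\in\{0,1\}^{t-1}} \Pro(T(x,t)=1\mid I_{t-1}(x)=B)\,\Pro(I_{t-1}(x)=B\mid N_{t-1},\tilde T(x,t)).
\end{equation*}

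The crux is then the observation (noted just before the lemma) that under the construction $N_{t-1}$ deterministically implies $I_{t-1}(x)=\mathbf{0}_{t-1}$: in each of the first $t-1$ slots there are either $0$ transmitters or more than two, and in either case the protocol-model decoding condition fails at $x$ (if $x$ itself transmits it decodes nothing by assumption, and otherwise there is always an interfering transmitter within $R_I$ of $x$), so $x$ decodes nothing. Hence $\Pro(I_{t-1}(x)=\mathbf{0}_{t-1}\mid N_{t-1},\tilde T(x,t)) = 1$ and every other term of the sum vanishes, leaving
\begin{equation*}
\Pro(T(x,t)=1\mid N_{t-1},\tilde T(x,t)) = \Pro(T(x,t)=1\mid I_{t-1}(x)=\mathbf{0}_{t-1}).
\end{equation*}
Running the identical argument with the admissible event $E_x := N_{t-1}$ in place of $N_{t-1}\cap\tilde T(x,t)$ shows that $p_t = \Pro(T(x,t)=1\mid N_{t-1})$ equals the same right-hand side, and comparing the two expressions yields the lemma.

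The step I expect to require the most care is not any calculation but the bookkeeping of \emph{admissibility}: one must check that every event on which we condition ($N_{t-1}$ and $\tilde T(x,t)$) genuinely avoids the variable $T(x,t)$, since this is exactly what lets the input-determined property "screen off" the extra conditioning through $I_{t-1}(x)$. A secondary, routine point is to confirm that the conditioning events have positive probability so that the conditional probabilities are well defined; this holds in the intended application, where $N_{t}$ is shown to occur with significant probability over the relevant range of $t$.
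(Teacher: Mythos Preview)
Your proof is correct and follows essentially the same argument as the paper: verify that $N_{t-1}\cap\tilde T(x,t)$ is an admissible $E_x$, expand via the input-determined identity, collapse the sum using $N_{t-1}\Rightarrow I_{t-1}(x)=\mathbf{0}_{t-1}$, and repeat with $E_x=N_{t-1}$ to identify $p_t$. If anything, your write-up is more careful about the admissibility check and the well-definedness of the conditionals than the paper's own proof.
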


\begin{proof}
First note that the event $N_{t-1} \cup \tilde T (x, t)$ meets the conditions of $E_x$ 
as laid down in Defn.~\ref{inputdet}, since it is a function of the past events, plus, events in the present excluding $T(x, t)$.

Since the algorithm is input-determined,
\begin{align*}
 P(T (x, t)|N_{t-1}, \tilde T(x, t)) \\
 = \sum_{B \in \{0, 1\}^{t-1}} 
   & \Pro(T(x, t) = 1 | I_{t-1}(x) = B) \\
   & \cdot  \Pro(I_{t-1} (x) = B|N_{t-1}, \tilde T(x, t))
% \lefteqn{P(T (x, t)|N_{t-1}, \tilde T(x, t))} \\
%  & = \sum_{B \in \{0, 1\}^{t-1}} \Pro(T(x, t)
%  = 1 | I_{t-1}(x) = B) \cdot  \Pro(I_{t-1} (x) = B|N_{t-1}, \tilde T(x, t))
\end{align*}
Note that clearly $\Pro(I_{t-1} (x) = B|N_{t-1} , \tilde T (x, t)) = 1$ for $B = \mathbf{0}_{t-1}$ and $\Pro(I_{t-1} (x)  = B|N_{t-1} , \tilde T (x, t)) = 0$ for all other $B$. Thus $P(T (x, t)|N_{t-1}, \tilde T(x, t)) = \Pro(T(x, t) = 1 | I_{t-1}(x) = \mathbf{0}_{t-1})$. A similar argument shows that  $p_t = \Pro(T(x, t) = 1 | I_{t-1}(x) = \mathbf{0}_{t-1})$, completing the
proof of the Lemma.
\end{proof}

We now provide a construction of nodes leading to the lower bound.
Consider two transmission regions that are non-overlapping, yet
close enough that they are included in each other's interference region. One will have a
constant number of nodes, the other will have $\Delta$ nodes, a value which will be set later.

\begin{figure}[ht]
\begin{center}
\includegraphics[width=4.5cm]{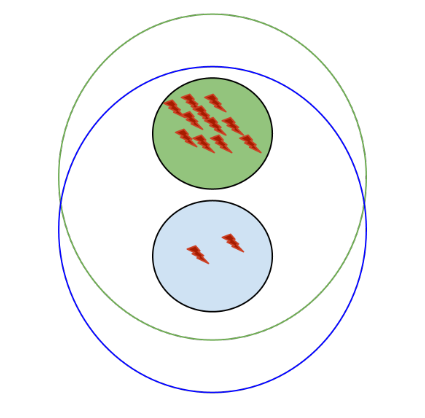}
\end{center}
\caption{The two transmission regions that are in each others interference regions. The top one has a lot of nodes, the bottom one only a few.}
\end{figure}

Partition the range $[\frac1{n^2} , 1] = \cup^r_{j=0} R_j$ where $R_0 = [-\infty, n^2)$ and for $j > 0$, $R_j = [\frac1{n^2} 16^j, \frac1{n^2} 16^{j+1})$ and $r = \Theta(\log n)$.
Consider for any $t$, the sequence $p_1, \ldots p_t$ . Define, for a range $R_i$ a weight function $w_i = \frac{|R_i \cap \{p_1 \ldots p_t \}|}{t}$.

Fix any $j$. For any $i$, define the function $f_i^j = \left(\frac{2}{e}\right)^{|i-j|+1}$.

Now we claim that (proof in Appendix \ref{sec:missing}):
\begin{lemma}
There must be range $R_j$ such that $j \leq \frac{\log n}{4}$ and 
\begin{equation}
\sum_i f_i^j w_i = O\left(\frac1{\log n}\right)  
\label{fwbound}
\end{equation}
\label{fwboundlemma}
\end{lemma}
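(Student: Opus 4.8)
The plan is to prove this by an averaging argument over the candidate centers $j$, using only one structural property of the weights: they form a sub-probability distribution. Indeed, each of the $t$ values $p_1, \ldots, p_t$ falls into at most one range $R_i$ and contributes exactly $\frac1t$ to the corresponding $w_i$, so $w_i \ge 0$ for all $i$ and $\sum_i w_i \le 1$ (with equality once the catch-all range $R_0$ is included, guaranteeing every $p_k$ is counted). Nothing else about the $p_k$ will be needed.

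First I would record that the kernel is summable in $j$ uniformly in $i$. Writing $\rho = \frac2e < 1$, for each fixed $i$,
\[
\sum_{j \in \mathbb{Z}} f_i^j \;=\; \rho \sum_{m \in \mathbb{Z}} \rho^{|m|} \;=\; \rho\,\frac{1+\rho}{1-\rho} \;=:\; C,
\]
a constant independent of $i$. Since all terms are nonnegative, restricting the sum to the admissible centers $j \in \{0, 1, \ldots, \lfloor \frac{\log n}{4}\rfloor\}$ only decreases it, so $\sum_{j=0}^{\lfloor \log n/4 \rfloor} f_i^j \le C$ as well.

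Next I would interchange the order of summation and combine the two facts:
\[
\sum_{j=0}^{\lfloor \log n/4 \rfloor} \Bigl( \sum_i f_i^j w_i \Bigr) \;=\; \sum_i w_i \Bigl( \sum_{j=0}^{\lfloor \log n/4 \rfloor} f_i^j \Bigr) \;\le\; C \sum_i w_i \;\le\; C .
\]
The left-hand side is a sum of $\Theta(\log n)$ nonnegative quantities totalling at most the constant $C$, so the smallest of them is at most $\frac{C}{\lfloor \log n/4 \rfloor} = O\!\left(\frac1{\log n}\right)$. The index $j$ achieving this minimum satisfies $j \le \frac{\log n}{4}$ and yields the bound of Eqn.~\ref{fwbound}, which proves the lemma.

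I do not anticipate any real obstacle; the whole argument reduces to the two elementary facts above. The only points needing care are (i) checking that every $p_k$ is captured by some range, which is precisely the purpose of the unbounded range $R_0$, so that $\sum_i w_i \le 1$ holds; and (ii) confirming that restricting the average to the $\Theta(\log n)$ centers with $j \le \frac{\log n}{4}$ does not spoil the bound, which it cannot, since we only discard nonnegative terms from an already bounded total.
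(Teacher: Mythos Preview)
Your argument is correct and is essentially the same averaging argument the paper gives: both note that $\sum_i w_i \le 1$, that $\sum_j f_i^j = O(1)$ uniformly in $i$, swap the double sum to get a total of $O(1)$, and then pigeonhole over the $\Theta(\log n)$ candidate values of $j$. Your write-up is slightly more careful than the paper's in that you explicitly restrict the averaging to the window $\{0,\ldots,\lfloor \log n/4\rfloor\}$ before applying pigeonhole, thereby directly securing the constraint $j \le \frac{\log n}{4}$; the paper's proof glosses over this point.
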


Consider the $j$ found in the above Lemma.
Set $\Delta = \frac1{P_j}$ where $P_j = \frac4{n^2} 16^j$.

Note that by the choice of $j$, $n^2 \geq \Delta \geq n$.
Now we bound $\Pro(N_t |N_{t-1})$ for $t > 1$ (the claim also applies to $P(N_1))$.

\begin{lemma}
Assume $t$ is such that $p_t \in  R_i$ for any $i$. Then $\Pro(N_t |N_{t-1}) \geq 1 - f_i^j$.
\label{ntbound}
\end{lemma}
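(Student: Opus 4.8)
The plan is to condition throughout on $N_{t-1}$ and to observe that, given this event, $N_t$ can fail only if exactly one or exactly two nodes transmit in slot $t$. Precisely, letting $X$ denote the number of nodes transmitting in slot $t$, the event $\overline{N_t}\cap N_{t-1}$ is exactly $\{X\in\{1,2\}\}$, so
$\Pro(N_t\mid N_{t-1}) = 1 - \big(\Pro(X=1\mid N_{t-1}) + \Pro(X=2\mid N_{t-1})\big)$. It therefore suffices to prove $\Pro(X=1\mid N_{t-1}) + \Pro(X=2\mid N_{t-1}) \le f_i^j$.

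First I would pin down the conditional law of $X$. By Lemma \ref{xindependent}, for every node the transmission in slot $t$ is, conditioned on $N_{t-1}$, independent of any configuration of the other nodes' transmissions and occurs with probability exactly $p_t$; chaining this over the nodes shows that, given $N_{t-1}$, the transmission indicators are mutually independent Bernoulli($p_t$) variables (all nodes share the same $p_t$ because under $N_{t-1}$ every reception history equals $\mathbf{0}_{t-1}$ and the algorithm is input-determined). Hence $X$ is Binomial$(m,p_t)$, where $m$ is the total number of nodes; since the small region contributes only $O(1)$ nodes, $m = (1+o(1))\Delta$.

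The core computation is then to control $\mu := \Ex[X\mid N_{t-1}] = m p_t$. Using $\Delta = 1/P_j = \frac{n^2}{4\cdot 16^j}$ and $p_t\in R_i$, so $p_t\in[\frac{16^i}{n^2},\frac{16^{i+1}}{n^2})$, I get $\mu = \Theta(16^{\,i-j})$, precisely $\frac{16^{i-j}}{4}\le \mu < 4\cdot 16^{i-j}$. Writing $d=i-j$, I would bound the two Binomial point masses by their Poisson surrogates, $\Pro(X=1)\le \mu\,e^{-(m-1)p_t}$ and $\Pro(X=2)\le \tfrac{\mu^2}{2}e^{-(m-2)p_t}$, i.e. essentially $g(\mu):=e^{-\mu}(\mu+\mu^2/2)$, and then verify $g(\mu)\le (2/e)^{|d|+1}$ over the interval $\mu\in[\tfrac{16^{d}}{4},4\cdot16^{d})$. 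This splits into three regimes: for $d$ large and positive $\mu$ is huge and the factor $e^{-\mu}$ beats $(2/e)^{d+1}$ by a wide margin; for $d$ negative $\mu$ is tiny and $g(\mu)\le \mu+\mu^2/2 = O(16^{d})$, which is far below $(2/e)^{|d|+1}$ since $\frac1{16}<\frac2e$; and for the few borderline values $d\in\{-1,0,1\}$ I would check the constant directly using that $g$ is unimodal, increasing on $\mu<\sqrt2$ and decreasing after, with maximum $g(\sqrt2)\approx 0.59 < 2/e$, so even the tightest case $d=0$ clears the bound.

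The main obstacle is exactly this borderline regime $d\approx 0$, where the slack between $g(\mu)$ and $(2/e)^{|d|+1}$ is small and one must also argue that the Binomial is close enough to its Poisson surrogate for the estimate to survive. Here I would invoke $j\le \frac{\log n}{4}$ (guaranteed by Lemma \ref{fwboundlemma}), which forces $p_t = O(1/n)$ whenever $|d|=O(1)$, so the correction factors $e^{\mu-(m-k)p_t}=e^{k p_t}=1+O(1/n)$ are negligible; in the complementary case where $p_t$ is not small, $\mu$ is necessarily $\Omega(n)$ and the bound holds trivially through the $e^{-\mu}$ factor. Assembling these estimates yields $\Pro(X\in\{1,2\}\mid N_{t-1})\le (2/e)^{|i-j|+1}=f_i^j$, which is the claim.
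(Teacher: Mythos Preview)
Your proof is correct and follows essentially the same three-case decomposition as the paper: the paper proves this lemma by invoking three sub-lemmas that treat $i<j$ (via Markov's inequality on $\Pro(X\ge 1)$), $i>j$ (via a Chernoff lower-tail bound on $\Pro(X\le 2)$), and $i=j$ (via the exact expression $\Delta p(1-p)^{\Delta-1}$), which is exactly your split on the sign of $d=i-j$. Your packaging through the single function $g(\mu)=e^{-\mu}(\mu+\mu^2/2)$ is a minor presentational difference, not a substantive one.
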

\begin{proof}
Follows from Lemmas \ref{ismall}, \ref{ilarge} and \ref{ieqj} (in Appendix \ref{sec:missing}).
\end{proof}

Now,
\begin{lemma}
If $t = o(\log^2 n)$, then, $\Pro(N_t) = 1 - \frac1{n^{o(1)}}$
\end{lemma}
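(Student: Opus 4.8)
The plan is to write $\Pro(N_t)$ as a telescoping product of one-step conditional probabilities and bound each factor using Lemma~\ref{ntbound}. Since $N_t\subseteq N_{t-1}\subseteq\cdots\subseteq N_1$,
\[
\Pro(N_t)=\prod_{s=1}^{t}\Pro(N_s\mid N_{s-1}).
\]
For each slot $s$ let $i(s)$ be the index with $p_s\in R_{i(s)}$, and let $R_j$ be the range produced by Lemma~\ref{fwboundlemma}. Lemma~\ref{ntbound} then gives $\Pro(N_s\mid N_{s-1})\ge 1-f_{i(s)}^{\,j}$, so
\[
\Pro(N_t)\ge\prod_{s=1}^{t}\bigl(1-f_{i(s)}^{\,j}\bigr).
\]
I would first record that every factor is bounded below by a positive constant, since $f_{i}^{\,j}\le f_j^{\,j}=2/e<1$ for all $i$; this is what lets me pass to an exponential bound without any factor degenerating to zero.

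Next I would aggregate the exponents using the weight lemma. Because $f_i^{\,j}\le 2/e<0.8$, the elementary inequality $1-x\ge e^{-2x}$ holds throughout $[0,2/e]$, so
\[
\Pro(N_t)\ge\exp\Bigl(-2\sum_{s=1}^{t}f_{i(s)}^{\,j}\Bigr)=\exp\Bigl(-2t\sum_i f_i^{\,j}w_i\Bigr),
\]
where I have rewritten the sum over slots as a sum over ranges weighted by $w_i=|R_i\cap\{p_1,\dots,p_t\}|/t$. Lemma~\ref{fwboundlemma} now supplies the crucial estimate $\sum_i f_i^{\,j}w_i=O(1/\log n)$, so the exponent is $O(t/\log n)$; and $t=o(\log^2 n)$ forces $t/\log n=o(\log n)$. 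Hence
\[
\Pro(N_t)\ge e^{-O(t/\log n)}=e^{-o(\log n)}=\tfrac{1}{n^{o(1)}},
\]
equivalently $\Pro(\overline{N_t})\le 1-\tfrac1{n^{o(1)}}$. This super-polynomial bound is exactly what Theorem~\ref{mainth3} needs: $\tfrac1{n^{o(1)}}$ exceeds $n^{-c}$ for every fixed $c$ once $n$ is large, so completion within $t$ slots cannot happen \emph{whp}.

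The main obstacle is the gap between what this argument robustly delivers and the literal near-one form $1-\tfrac1{n^{o(1)}}$. The slots whose probability lands in (or adjacent to) the critical range $R_j$ — where $p_s\approx 1/\Delta$ and exactly one or two simultaneous transmitters occur with constant probability — each contribute only a constant factor $1-f_{i(s)}^{\,j}=\Theta(1)$ to the product. From $f_j^{\,j}=\Theta(1)$ and $\sum_i f_i^{\,j}w_i=O(1/\log n)$ I get $w_j=O(1/\log n)$, so the number of such slots is $t\,w_j=O(t/\log n)=o(\log n)$; but an $o(\log n)$ string of constant factors multiplies out to precisely $n^{-o(1)}$, not $1-o(1)$. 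Thus pushing to a genuine $1-\tfrac1{n^{o(1)}}$ would require the critical-slot count to be $O(1)$, which a general input-determined algorithm need not respect (the clean near-one bound is available only in the restricted regime $t=o(\log n)$, where $\Pro(\overline{N_t})\le\sum_s f_{i(s)}^{\,j}=O(t/\log n)=o(1)$ directly). I would therefore treat the displayed estimate as the operative content of the lemma, and separately verify the two boundary regimes feeding Lemma~\ref{ntbound} — $i<j$, where the transmitter count has mean $\ll 1$ and ``exactly one'' scales like $16^{\,i-j}$, and $i>j$, where the mean is $\gg 1$ and ``at most two'' is doubly-exponentially small — to confirm that the loose bound $f_i^{\,j}=(2/e)^{|i-j|+1}$ indeed dominates the true per-slot failure in both directions, since the whole aggregation rests on that inequality.
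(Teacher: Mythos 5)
Your proof follows essentially the same route as the paper's: the same telescoping product $\Pro(N_t)=\prod_{s}\Pro(N_s\mid N_{s-1})$, the same per-slot bound from Lemma~\ref{ntbound}, and the same aggregation into $\prod_i (1-f_i^j)^{w_i t}\ge \exp\bigl(-O\bigl(t\sum_i f_i^j w_i\bigr)\bigr)$ via an elementary inequality (the paper uses $1-x\ge 16^{-x}$ for $x\le 2/e$ where you use $1-x\ge e^{-2x}$), finishing with Lemma~\ref{fwboundlemma} to get $\Pro(N_t)\ge \frac{1}{n^{o(1)}}$. Your side remark is also on target: the paper's own proof and its use in Theorem~\ref{mainth3} deliver exactly this lower bound (``$N_t$ occurs with not-too-small probability $\frac{1}{n^{o(1)}}$''), so the literal near-one form ``$1-\frac{1}{n^{o(1)}}$'' in the lemma statement is a misstatement that neither the paper's argument nor yours proves --- or needs.
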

\begin{proof}
By Lemma \ref{ntbound},
\begin{align}
\Pro(N_t) & = \Pro(N_1 )\Pro(N_2 |N_1 )\Pro(N_3 |N_2 ) \ldots \Pro(N_t |N_{t-1})\nonumber\\
= & \prod_i (1 - f_i^j )^{w_i t} 
\label{ntbound1}
\end{align}
The following claim can be proven using basic calculus:
\begin{claim}
If $x \leq \frac2{e}$, then $1 - x \geq \frac1{16^x}$
\end{claim}

Continuing with Eqn.~\ref{ntbound1} using the above claim:
 \begin{align*}
 \Pro(N_t) & = \prod_i (1 - f^i_j)^{w_i t} \geq \prod_i \frac1{16^{f_i^j w_i t}}
 = \frac1{16^{t \sum_i f^j_i w^i}} \\
 & \geq \frac1{16^{t O(1/\log n)}} =  \frac1{16^{o(\log n)}} = \frac1{n^{o(1)}}\ ,
 \end{align*}
%\[ \Pro(N_t) = \prod_i (1 - f^i_j)^{w_i t} \geq \prod_i \frac1{16^{f_i^j w_i t}}
%   = \frac1{16^{t \sum_i f^j_i w^i}}
%   \geq \frac1{16^{t O(1/\log n)}} =  \frac1{16^{o(\log n)}} = \frac1{n^{o(1)}}\ , \]
completing the proof. The second inequality is from Eqn.~\ref{fwbound}. The equality right after that uses the assumption $t = o(\log^2 n)$.
\end{proof}

This Lemma shows that if $t = o(\log^2 n)$, then with not too small probability $\frac1{n^{o(1)}}$, the event $N_t$ occurs. Recall that $N_t$ implies that none of the nodes received any messages by time $t$, thus local broadcasting has not completed (even in the ``bottom" broadcasting range that has only a constant number of nodes). This completes the proof of Thm.~\ref{mainth3}.

\bibliographystyle{abbrv}
\bibliography{references}		

\appendix
\section{Missing Lemmas}
\label{sec:missing}

\begin{proof}{[of Lemma \ref{lowpmeanssuccess}]}
Consider any $y \in 2 B_x$. By definition of $2 B_x$, $d(x, y) \leq 2 R_B$. Now consider any other transmitting node $z$. We will show that,

\begin{claim}
$d(z, x) \leq 3 (\beta + 2) d(z, y)$
\end{claim}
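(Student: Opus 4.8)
The plan is to extract a lower bound on $d(z,x)$ from the \lp\ hypothesis and then close the claim with a single triangle inequality. First I would observe that received power is additive and nonnegative, so the contribution of the single interferer $z$ to $P_x$ is at most $P_x$ itself; since $P=1$, this contribution equals $1/d(z,x)^\alpha$. Combining with the definition of \lp, which asserts $P_x \le 1/(4(\beta+4)R_B)^\alpha$, gives $1/d(z,x)^\alpha \le 1/(4(\beta+4)R_B)^\alpha$, i.e.\ $d(z,x) \ge 4(\beta+4)R_B$. In words, \lp\ forces every other transmitting node to sit well outside $T_x$.

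Next I would bring in the geometry of $y$. Since $y \in 2B_x$ we have $d(x,y) \le 2R_B$, so the triangle inequality yields $d(z,y) \ge d(z,x) - d(x,y) \ge d(z,x) - 2R_B$. Hence it suffices to check the inequality $d(z,x) \le 3(\beta+2)\bigl(d(z,x) - 2R_B\bigr)$, which rearranges to $d(z,x) \ge \frac{6(\beta+2)}{3\beta+5}R_B$. A one-line estimate shows $\frac{6(\beta+2)}{3\beta+5} = 2 + \frac{2}{3\beta+5} < 3$, so it is enough that $d(z,x) \ge 3R_B$ --- and this is dwarfed by the bound $d(z,x) \ge 4(\beta+4)R_B \ge 16R_B$ from the first step. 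This proves the claim.

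The claim is the workhorse of the remaining argument: it bounds each interferer's distance to the intended receiver $y$ by a constant multiple of its distance to $x$, so that the total interference at $y$ is at most a constant factor larger than $P_x$, which is tiny by \lp. Together with the fact that $y$ lies deep inside the transmission region (so the desired signal $1/d(x,y)^\alpha$ is strong), this will yield the SINR inequality at $y$ and hence reception. I do not expect a genuine obstacle here; the only points requiring care are using additivity to pass from $P_x$ to the single-term distance bound on $d(z,x)$, and getting the direction of the triangle inequality right so that it is the \emph{lower} bound on $d(z,y)$ (not an upper bound) that feeds into the claim.
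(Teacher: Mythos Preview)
Your proposal is correct and follows essentially the same route as the paper: use the \lp\ bound on $P_x$ together with additivity to get $d(z,x)\ge 4(\beta+4)R_B$, then apply the triangle inequality with $d(x,y)\le 2R_B$. In fact your write-up is more careful than the paper's, which stops at $d(z,y)\ge 3(\beta+4)R_B$ and declares the claim proven without explicitly converting this into the stated ratio bound; your reduction to the inequality $d(z,x)\ge \frac{6(\beta+2)}{3\beta+5}R_B$ closes that step cleanly.
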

\begin{proof}
By the signal propagation model, $\frac1{d(z, x)^{\alpha}}$ is the power received at $x$ from $z$. Since {\lp} occurred,
\begin{align*}
\frac1{d(z, x)^{\alpha}} & \leq \frac{1}{((4 \beta + 4) R_B)^{\alpha}}   \\
\Rightarrow\,\, d(z, x) & \geq 4 (\beta + 4) R_B
\end{align*}

By the triangle inequality, $d(z, y) \geq  d(z, x) - d(x, y)  > 
4 (\beta + 4) R_B - 2 R_B \geq 3 (\beta + 4) R_B$, proving the claim.
%\footnote{The equations don't add up here. Yeah, but they are good enough. There is this pesky noise term in the final computation, so a bit of ``fat" in the bounds is probably necessary.}
\end{proof}

This implies, by basic computation and summing over all transmitting $z$, that
\begin{equation}
P_y \leq \left(\frac43\right)^{\alpha} P_x
\label{relatepxpy}
\end{equation}
Now, the SINR at node $y$ (in relation to the message sent by $x$) is

\begin{align*}
  \frac{\frac1{2^{\alpha} R_B^{\alpha}}}{P_y + N} \overset{1}{\geq}
    \frac{\frac1{2^{\alpha} R_B^{\alpha}}}{\left(\frac43\right)^{\alpha}  P_x + N} 
    \overset{2}{\geq}
 \frac{\frac1{2^{\alpha} R_B^{\alpha}}}{\left(\frac43\right)^{\alpha}  
\frac{1}{((4 (\beta + 4)) R_B)^{\alpha}}  + \frac{\phi^{\alpha}}{R_B^{\alpha}\beta}} \overset{3}{\geq} \beta
\end{align*}

Explanation of numbered (in)equalities:

\begin{enumerate}
\item By Eqn.~\ref{relatepxpy}.
\item Plugging in the bound of $P_x$ (since {\lp} occurs at $x$) and noting that
$N = \frac1{\beta R_T^{\alpha}} = \frac{\phi^{\alpha}}{\beta R_B^{\alpha}}$, from the definitions of $R_T$ and $R_B$.
\item Follows from simple computation once $\phi$ is set to a small enough constant ($\phi = \frac16$ suffices).
\end{enumerate}

Thus the SINR condition is fulfilled, and $y$ receives the message from $x$.
\end{proof}

\begin{lemma}
Consider any slot $t$ and any node $z$.
Assume that in that slot, for all broadcast regions $B_x$, $\sum_{y  \in B_x} \leq \frac12$. Then, {\lp} occurs for $z$ with probability 
at least $\frac12 \left(\frac14\right)^{\frac12 O\left(\frac1{\phi^2}\right)}$.
\label{lphappens}
\end{lemma}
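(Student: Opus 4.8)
The plan is to split the interference seen at $z$ into a near part and a far part, chosen so that silencing the near part costs only a constant factor in probability while the far part is controlled in expectation. Write $P_z = \sum_w \frac{1}{d(w,z)^\alpha}$ for the power received at $z$, the sum taken over transmitting nodes $w \neq z$, and let $\theta = \frac{1}{(4(\beta+4)R_B)^\alpha}$ be the {\lp} threshold. First I would extract the only consequence of the hypothesis that is needed: a density bound. Any disk of radius $r$ can be covered by $O((r/R_B)^2)$ disks of radius $R_B/2$, and each such small disk, if it contains some node $x$, is contained in $B_x$ (any two of its points are within $R_B$); hence its probability mass is at most $\frac12$ by hypothesis. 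Summing, $\sum_{w \in D} p_w = O((r/R_B)^2)$ for every disk $D$ of radius $r$.

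Next I would fix a near radius $\rho = \Theta(R_T) = \Theta(R_B/\phi)$, taken to be a large enough constant multiple of $R_B$ (the constant depending on $\alpha,\beta$), and let $D$ be the disk of radius $\rho$ about $z$. Let $\mathcal{A}$ be the event that no node of $D$ transmits and $\mathcal{B}$ the event that the power reaching $z$ from nodes outside $D$ is at most $\theta$. Under $\mathcal{A}$ every transmitter lies outside $D$, so $P_z$ equals the far power, which under $\mathcal{B}$ is at most $\theta$; thus $\mathcal{A}\cap\mathcal{B}$ implies {\lp}. Because transmissions are independent across nodes and $\mathcal{A}$ depends only on the transmissions inside $D$ while $\mathcal{B}$ depends only on those outside $D$, the two events are independent, so $\Pro(\lp) \ge \Pro(\mathcal{A})\,\Pro(\mathcal{B})$.

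For $\Pro(\mathcal{A})$ I would apply the density bound to $D$: since $D$ is covered by $O((\rho/R_B)^2) = O(1/\phi^2)$ broadcast-sized disks, $\sum_{w\in D} p_w \le \frac12\, O\!\left(\frac1{\phi^2}\right)$. Using $1-x \ge 4^{-x}$ on $[0,\tfrac12]$ (valid as every $p_w \le \tfrac1{16}$ by line \ref{probincrease}), this yields $\Pro(\mathcal{A}) = \prod_{w\in D}(1-p_w) \ge \left(\frac14\right)^{\sum_{w\in D} p_w} \ge \left(\frac14\right)^{\frac12 O(1/\phi^2)}$, exactly the second factor of the claim. For $\Pro(\mathcal{B})$ I would bound the expected far power by a dyadic annulus decomposition: the annulus at distance $[\rho 2^k, \rho 2^{k+1})$ carries mass $O((\rho 2^k/R_B)^2)$ and each of its nodes contributes power at most $(\rho 2^k)^{-\alpha}$, for an expected contribution $O((\rho 2^k)^{2-\alpha}/R_B^2)$; summing the geometric series (convergent since $\alpha > 2$) gives $\Ex[\text{far power}] = O\!\left(\frac{1}{R_B^2 \rho^{\alpha-2}}\right)$. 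Choosing the constant in $\rho$ large enough makes this at most $\theta/2$, so Markov gives $\Pro(\mathcal{B}) \ge \frac12$, the leading factor. Multiplying the two estimates proves the Lemma.

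The step needing the most care is calibrating $\rho$. The tail bound $O(1/(R_B^2\rho^{\alpha-2}))$ must fall below $\theta/2 = \Theta(1/(\beta^\alpha R_B^\alpha))$, which forces $\rho \gtrsim \beta^{\alpha/(\alpha-2)} R_B$, yet $\rho$ must remain $O(R_T) = O(R_B/\phi)$ so that the near covering count stays $O(1/\phi^2)$ and the exponent in the final bound matches. Verifying that both requirements hold simultaneously for the fixed constants $\alpha > 2$, $\beta$, and $\phi = \frac16$ — essentially that the factor $\beta^{\alpha/(\alpha-2)}$ is absorbed into the $O(1/\phi^2)$ constant — is the only delicate point; the remaining covering and Markov estimates are routine.
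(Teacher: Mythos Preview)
Your proposal is correct and follows essentially the same approach as the paper: split at a near radius of order $R_T$, silence the near disk using the $O(1/\phi^2)$ covering by broadcast regions together with $1-x\ge 4^{-x}$, and control the far contribution in expectation (your dyadic annuli are exactly the standard computation the paper outsources to \cite{Goussevskaia2008Local}) followed by Markov. The only cosmetic differences are that the paper takes $\rho=R_T$ from the outset rather than calibrating it afterwards, and that you make the independence of the near and far events explicit where the paper leaves it implicit.
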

\begin{proof}

Let $B = B_x \setminus \{x\}$.  We first prove that there is a substantial probability that no node in $B$ transmits. Assuming this probability is $\Pro_n$

\begin{align*}
&  \Pro_n \geq \prod\limits_{w \in B} (1 - p_w)  \geq \prod\limits_{w \in B_x} (1 - p_w)  \geq \left(\frac14\right)^{\sum_w p_w} \geq \left(\frac14\right)^{\frac12}
\end{align*}
The third inequality is from Fact 3.1 \cite{Goussevskaia2008Local}, and the last from the bound $\sum_w p_w \leq \frac12$.

Let $\Pro_T$ be the probability that no other node transmits in $T_x$. Since $R_B = \phi R_T$, $T_x$ can be covered by $O(\frac1{\phi^2})$ broadcast regions (this can be shown using basic geometric arguments). 
Thus, 

\begin{equation}
\Pro_T \geq \Pro_n^{O\left(\frac1{\phi^2}\right)} \geq \left(\frac14\right)^{\frac12 O\left(\frac1{\phi^2}\right)}
\end{equation}

Since no other node in $T_x$ is transmitting, we only need to bound the signal received from outside $T_x$. 

To this end, we need the following Claim (which is a restatement of Lemma 4.1 of \cite{Goussevskaia2008Local} and can be proven by standard techniques):

\begin{claim}
Assume that for all broadcast regions $B_x$, \\$\sum_{y  \in B_x} p_y \leq \frac12$.
Consider a node $x$. Then the expected power received at node $x$ from nodes not in $T_x$ can be upper bounded by 

$$\frac18 \frac{\alpha -1}{\alpha -2} 3^3 2^{\alpha -2} \frac{\phi^2}{R_B^{\alpha}} \leq \frac{1}{2 (4 (\beta + 4) R_B)^{\alpha}}$$ for appropriately small $\phi$.
\end{claim}

Then by Markov's inequality, with probability at least $\frac12$, the power
received from nodes outside of $T_x$ is at most $\frac{1}{ (4 (\beta + 4) R_B)^{\alpha}}$.

Thus, with probability $\frac12 \Pro_T$, {\lp} occurs at $x$, proving the Lemma.
\end{proof}

\begin{lemma}
With high probability, each node transmits at least $\frac12 \gamma \log n$ times, and at most $2 \gamma \log n$ times. 
%\footnote{We need to be more precise. It's not exactly that number, but we need both upper and lower bounds.}
\label{transmissiontimes}
 \end{lemma}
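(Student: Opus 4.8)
The plan is to exploit the fact that the halting rule is governed not by the realized transmissions but by the running sum of \emph{probabilities} $tp_y$, which decouples the stopping rule from the coin flips. Index the slots a node executes by $j = 1, 2, \ldots$, let $p_j$ denote the transmission probability used in slot $j$ (the value read in line~\ref{choosetransmit}), and let $X_j \in \{0,1\}$ be the indicator that the node actually transmits in slot $j$, so that $N := \sum_j X_j$ is the number of transmissions we wish to control. Writing $S_i = \sum_{j \le i} p_j$, the variable $tp_y$ equals $S_i$, and the node halts at the first slot $\tau$ with $S_\tau > \gamma \log n$. Since every $p_j \le \tfrac{1}{16}$ (line~\ref{probincrease}) and $S_{\tau-1} \le \gamma \log n$, we obtain the deterministic pin
\[ \gamma \log n < S_\tau \le \gamma \log n + \tfrac{1}{16}. \]
Thus the total expected transmission mass is always $\Theta(\gamma \log n)$, independent of the adaptive history; the only thing left is to show that $N$ concentrates around $S_\tau$.

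First I would set up the martingale $M_i = \sum_{j \le i}(X_j - p_j)$ with respect to the natural filtration $\mathcal{F}_i$ recording all events (of all nodes) up to slot $i$. Because $p_j$ is fixed before the slot-$j$ coin is tossed, it is $\mathcal{F}_{j-1}$-measurable and $\Ex[X_j \mid \mathcal{F}_{j-1}] = p_j$, so $M_i$ is a genuine martingale and $N = M_\tau + S_\tau$. It is worth stressing that this holds even though $p_j$ depends on the node's own past transmissions (a transmission blocks reception in that slot, altering $rc_y$ and hence future probabilities), since the martingale identity needs only the one-step conditional mean. Also, $p_j \ge \tfrac{1}{128n}$ throughout (line~\ref{resetline}), so $\tau \le 128n(\gamma \log n + \tfrac1{16})$ is bounded, which legitimizes optional stopping.

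The key quantitative point, and the only place I expect real difficulty, is that a plain Azuma--Hoeffding bound over $\tau$ increments is useless here: $\tau$ can be as large as $\Theta(n \log n)$, so bounding each increment by $1$ would only yield concentration at scale $\sqrt{\tau} \gg \log n$. Instead I would use a variance-sensitive (Freedman / Bernstein-type) martingale tail bound, observing that the predictable quadratic variation is controlled by the same pinned mass,
\[ \langle M \rangle_\tau = \sum_{j \le \tau} p_j(1-p_j) \le S_\tau \le \gamma \log n + \tfrac1{16} =: V . \]
Concretely, for $\theta$ small the process $\exp\!\big(\theta M_i - (e^\theta - 1 - \theta) S_i\big)$ is a supermartingale started at $1$, so by optional stopping its expectation at $\tau$ is at most $1$; combining this with $S_\tau \le V$ and Markov's inequality gives the standard Chernoff estimate $\Pro(|M_\tau| \ge \lambda) \le 2\exp(-\Omega(\lambda^2/V))$ for $\lambda = O(V)$.

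Taking $\lambda = \tfrac12 \gamma \log n$ gives $\Pro(|M_\tau| \ge \tfrac12 \gamma \log n) \le n^{-\Omega(\gamma)}$, so that with this probability
\[ \tfrac12 \gamma \log n \le S_\tau - \lambda \le N = M_\tau + S_\tau \le S_\tau + \lambda \le 2\gamma \log n . \]
Choosing $\gamma$ a large enough constant makes the failure probability $n^{-\Omega(\gamma)}$ small enough to survive a union bound over all $n$ nodes, which establishes the claim for every node simultaneously, whp.
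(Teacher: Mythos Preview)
Your proof is correct and follows the same skeleton as the paper's: observe that the cumulative transmission probability $tp_y=\sum_j p_j$ is (essentially) pinned at $\gamma\log n$ by the halting rule, and then apply a Chernoff-type bound to conclude that the realized number of transmissions concentrates around this value. The paper's own proof is a two-sentence sketch that simply invokes ``the standard Chernoff bound''; you have supplied the rigorous justification it omits, namely that because the probabilities $p_j$ can depend on the node's own past coin flips (transmitting blocks reception, which feeds back into $rc_y$ and hence into the fallback/probability schedule), one must use the martingale form of the bound (Freedman/Bernstein) rather than the i.i.d.\ version, with the predictable variance controlled by the same pinned mass $S_\tau$.
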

 \begin{proof}
By the description of the algorithm, when the node stops, its total transmission probability is $\gamma \log n$.
By the standard Chernoff bound, the actual number of transmissions is very close to this number, whp.
\end{proof}

\begin{lemma}
If $p_t \in R_i$ and $i < j$, then $\Pro(N_t |N_{t-1}) > 1 - \left(\frac2{e}\right)^{j-i+1}$.
\label{ismall}
\end{lemma}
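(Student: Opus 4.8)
The plan is to turn the statement into a routine binomial tail estimate, using the conditional independence already established in Lemma~\ref{xindependent}. First I would condition on $N_{t-1}$. Applying Lemma~\ref{xindependent} one node at a time, the transmission indicators of the nodes in slot $t$ are mutually independent given $N_{t-1}$, each equal to $1$ with probability exactly $p_t$. So if $M = \Delta + O(1)$ denotes the total number of nodes (the $\Delta$ nodes of the dense region plus the constantly many nodes of the sparse region) and $S$ is the number of nodes transmitting in slot $t$, then conditionally on $N_{t-1}$ we have $S \sim \mathrm{Bin}(M,p_t)$. Since $N_{t-1}$ forces every earlier slot to have $0$ or more than $2$ transmitters, the event $\compl{N_t}$ conditioned on $N_{t-1}$ is precisely $\{S \in \{1,2\}\}$. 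Hence it suffices to show $\Pro(S=1)+\Pro(S=2) < \left(\frac2e\right)^{j-i+1}$.

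Next I would estimate the mean $\mu = \Ex[S] = M p_t$. Because $p_t \in R_i = [\,16^i/n^2,\ 16^{i+1}/n^2)$ and $\Delta = 1/P_j$ with $P_j = (4/n^2)16^j$, we get $\Delta p_t = p_t/P_j < 16^{i+1}/(4\cdot 16^j) = 4\cdot 16^{-(j-i)}$. The constantly many nodes of the sparse region contribute only $O(1/n)$ to $\mu$: since $j \le \tfrac14\log n$ and $i<j$ we have $16^{i+1} \le 16^{j} \le n$, so each such node transmits with probability at most $16^{i+1}/n^2 = O(1/n)$. Therefore $\mu \le 4\cdot 16^{-(j-i)} + o(1)$, and in particular, since $i<j$ gives $j-i \ge 1$, we obtain $\mu < \tfrac14$.

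Finally I would bound the tail and check the constants. Using $\Pro(S=1) = Mp_t(1-p_t)^{M-1} \le \mu$ and $\Pro(S=2) = \binom{M}{2}p_t^2(1-p_t)^{M-2} \le \mu^2/2$, I get $\Pro(\compl{N_t}\mid N_{t-1}) \le \mu + \mu^2/2$. Writing $k = j-i \ge 1$, the left-hand side is $O(16^{-k})$, while the target $\left(\frac2e\right)^{k+1}$ decays only at rate $2/e \approx 0.736 > 1/16$; so once the inequality holds at $k=1$ it holds for every larger $k$, the gap only widening. At $k=1$ the bound $\mu < \tfrac14$ gives $\mu + \mu^2/2 < \tfrac14 + \tfrac1{32} = \tfrac9{32} < \left(\frac2e\right)^2 \approx 0.54$, which closes the argument and yields $\Pro(N_t \mid N_{t-1}) > 1 - \left(\frac2e\right)^{j-i+1}$. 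I expect the only real friction to be the constant bookkeeping in this terminal step — the base case $k=1$, where the target probability is a non-negligible constant rather than something tiny — together with confirming that the sparse region's contribution to $\mu$ is genuinely lower order; both are handled by the estimates above.
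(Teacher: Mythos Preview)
The proposal is correct and follows essentially the same route as the paper: bound $\Pro(\compl{N_t}\mid N_{t-1})$ by the expected number of transmitters $\mu \approx \Delta p_t \le 4\cdot 16^{-(j-i)}$ and compare this to $(2/e)^{j-i+1}$. The paper does it in one line via Markov's inequality, $\Pro(\compl{N_t}\mid N_{t-1}) \le \Pro(T\ge 1\mid N_{t-1}) \le \Ex(T\mid N_{t-1})$, while you compute $\Pro(S=1)+\Pro(S=2)$ explicitly; both yield the same estimate, and your handling of the sparse region's $O(1/n)$ contribution is in fact more careful than the paper's.
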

\begin{proof}
Essentially, by Markov's inequality. Let the number of nodes transmitting at time
$t$ be $T$. Then, $\Ex(T) = p_t \Delta \leq \frac1{n^2} 16^{i+1} \frac{n^2}{4 \cdot 16^j}$ (by the choice of $p_t$). Now, if $\compl{N_t}$ is the event of
$N_t$ not occurring, then

$$\Pro(\compl{N_t} | N_{t-1}) \leq \Pro(T \geq 1 | N_{t-1}) \leq \Ex(T | N_{t-1}) = \frac1{4 \cdot 16^{j-i-1}}$$
This implies the Lemma after some elementary manipulations.
\end{proof}

\begin{lemma}
Let $p_t \in R_i$ and $i > j$, then $\Pro(N_t | N_{t-1}) > 1 - \left( \frac2{e} \right)^{i-j+1}$.
\label{ilarge}
\end{lemma}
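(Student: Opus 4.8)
The plan is to mirror the structure of Lemma~\ref{ismall}, which handles the small-mean case $i<j$ by a one-sided Markov bound on $\Pro(T\ge 1)$; here, with $i>j$, the expected number of transmitters is large, so the obstruction to $N_t$ is that \emph{too few} nodes transmit, and I will bound a lower tail instead.

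First I would fix the conditioning. Under $N_{t-1}$ no node has ever decoded a message, so in the lower-bound construction no node has halted, and by Lemma~\ref{xindependent} every node in the dense region transmits independently with probability $p_t$. Hence the number $T$ of transmitters in the dense region is distributed as $\mathrm{Bin}(\Delta, p_t)$, with mean $\lambda = \Delta p_t$. Plugging in $\Delta = 1/P_j = \frac{n^2}{4\cdot 16^j}$ together with the lower endpoint $p_t \ge \frac{16^i}{n^2}$ of $R_i$ gives $\lambda \ge \frac{16^{i-j}}{4} \ge 4$, the last step using $i>j$.

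Next I would reduce $\compl{N_t}$ to a clean event about $T$ alone. Given $N_{t-1}$, the event $\compl{N_t}$ is exactly that the whole system has $1$ or $2$ transmitters in slot $t$; since the dense-region count never exceeds the system count, this event is contained in $\{T\le 2\}$, so $\Pro(\compl{N_t}\mid N_{t-1}) \le \Pro(T\le 2)$. This conveniently removes the sparse region from the analysis, exactly as Lemma~\ref{ismall} discards it by passing to $\Pro(T\ge 1)$.

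Finally I would carry out the tail estimate $\Pro(T\le 2) = \sum_{k=0}^{2}\binom{\Delta}{k}p_t^k(1-p_t)^{\Delta-k} \le \bigl(1+\lambda+\tfrac{\lambda^2}{2}\bigr)(1-p_t)^{\Delta-2}$ and compare it to $(2/e)^{i-j+1}$. The main obstacle is that the estimate is tight precisely at the boundary $i=j+1$, where $\lambda$ is as small as $4$: a generic Chernoff lower-tail bound only yields a factor $e^{-\lambda/2}$ rather than $e^{-\lambda}$, which at $\lambda=4$ is too weak to fit under $(2/e)^2\approx 0.54$. The fix is to use the precise small-count expression together with the fact that $j\le \tfrac{\log n}{4}$ forces $16^j \le n$, so that $p_t$ is genuinely tiny near the boundary; this lets me replace $(1-p_t)^{\Delta-2}$ by $e^{-\lambda}$ up to a $1+o(1)$ factor, giving $\Pro(T\le 2) \le (1+o(1))\bigl(1+\lambda+\tfrac{\lambda^2}{2}\bigr)e^{-\lambda}$, which at $\lambda=4$ is about $0.24 < (2/e)^2$ and then shrinks super-exponentially in $i-j$ (while $(2/e)^{i-j+1}$ decays only geometrically), closing the bound for all $i>j$.
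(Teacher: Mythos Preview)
Your approach is essentially the same as the paper's: condition on $N_{t-1}$, invoke Lemma~\ref{xindependent} to get i.i.d.\ Bernoulli$(p_t)$ transmissions, and bound $\Pro(\compl{N_t}\mid N_{t-1})$ by the lower-tail probability $\Pro(T\le 2)$ of a binomial with mean $\lambda=\Delta p_t\ge \tfrac{16^{i-j}}{4}\ge 4$. The paper then applies the multiplicative Chernoff bound $\Pro(X\le(1-\delta)\mu)\le e^{-\mu\delta^2/2}$ with $\delta=1-2/\mu$, obtaining $e^{-(1-2/\mu)^2\mu/2}\le e^{-\mu/8}$ and finishing ``after some calculations.'' You instead expand $\Pro(T\le 2)$ directly as $(1+\lambda+\tfrac{\lambda^2}{2})(1-p_t)^{\Delta-2}$ and compare to $(2/e)^{i-j+1}$; this is a sharper elementary bound and, unlike the paper's $e^{-\mu/8}$ (which at the worst case $\mu=4$ gives $e^{-1/2}\approx 0.606 > (2/e)^2\approx 0.541$), it actually clears the boundary case $i=j+1$ cleanly.

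One small nit: your remark that ``a generic Chernoff lower-tail bound only yields $e^{-\lambda/2}$, which at $\lambda=4$ is too weak'' is numerically off, since $e^{-2}\approx 0.135$ is already well under $(2/e)^2$. The genuine weakness is in the paper's specific form $e^{-\mu\delta^2/2}$ with $\delta=1/2$, yielding only $e^{-\mu/8}$. This is just motivation, though; your actual argument via the exact binomial sum, together with $j\le\tfrac{\log n}{4}\Rightarrow p_t=O(1/n)$ so that $(1-p_t)^{\Delta-2}=(1+o(1))e^{-\lambda}$, and the monotonicity of $(1+\lambda+\tfrac{\lambda^2}{2})e^{-\lambda}$ in $\lambda$, is correct.
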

\begin{proof}
Consider the following Chernoff-type bound: Let $\{X_i\}$ be independent
Poisson trials such that $X = \sum_i X_i$ and $\mu = \Ex(X)$. Then, $\Pro(X \leq (1 - \delta)\mu) \leq e^{-\mu \delta^2/2}$ . See
Thm.~4.5 of \cite{DBLP:books/daglib/0012859} for a reference.

Now let $X$ be the number of transmissions in the slot and we would like to lower bound the
probability of there being at most $2$ of them. Note that Chernoff bound needs independence between the variables, but we have shown that in Lemma \ref{xindependent}.

Set $\delta = 1 - 2/\mu$, which means that $(1 - \delta)\mu = 2$. It is easy to verify that $\mu \geq 4$. Thus,
\begin{align*}
\Pro(\compl{N_t}|N_{t-1}) 
 & \leq \Pro(X \geq 2|N_{t-1}) \leq e^{-(1-\frac2{\mu})^2 \mu/2} \\ 
 & \leq e^{-(\mu/8)} = \exp(- \frac{16^i}{16^j})
\end{align*}
which implies the Lemma after some calculations.
\end{proof}

\begin{lemma}
If $p_t \in R_j$, $\Pro(N_t|N_{t-1}) \geq 1 - \frac2{e}$
\label{ieqj}
\end{lemma}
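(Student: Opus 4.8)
The plan is to reduce everything to a direct estimate of $\Pro(\compl{N_t}\mid N_{t-1})$, where $\compl{N_t}$ is the event that exactly one or exactly two nodes transmit in slot $t$ (since $N_t$ is ``$0$ or more than $2$''). First I would invoke Lemma \ref{xindependent}: conditioned on $N_{t-1}$, each of the $\Delta$ nodes of the large (top) region transmits independently with probability $p_t$, so the number $X$ of transmitters there is distributed as $\text{Binomial}(\Delta, p_t)$. The bottom region contributes only $O(1)$ nodes, each transmitting with probability $p_t$, so the chance that any of them transmits is at most $O(1)\cdot p_t = o(1)$ and can be absorbed at the end; it therefore suffices to bound $\Pro(X \in \{1,2\})$ and add the $o(1)$ bottom-region term.

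Second, I would pin down the mean. Since $\Delta = 1/P_j = n^2/(4\cdot 16^j)$ and $p_t \in R_j = [16^j/n^2,\,16^{j+1}/n^2)$, the mean $\mu = \Ex(X) = \Delta p_t$ satisfies $\tfrac14 \le \mu < 4$; moreover, because $\Delta \ge n$, we have $p_t = \mu/\Delta \le 4/n$, which is tiny.

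Third, I would bound the two relevant point masses. Using $\binom{\Delta}{k} \le \Delta^k/k!$ and $(1-p_t)^{\Delta} \le e^{-\mu}$, one gets
\[
\Pro(X = 1) + \Pro(X = 2) \le \Bigl(\mu + \tfrac{\mu^2}{2}\Bigr) e^{-\mu}\,(1-p_t)^{-2},
\]
and since $p_t \le 4/n$ the correction factor $(1-p_t)^{-2}$ is $1 + O(1/n)$. Writing $g(\mu) = (\mu + \mu^2/2)e^{-\mu}$, a one-variable calculus check (its derivative is $g'(\mu) = e^{-\mu}(1 - \mu^2/2)$) shows that $g$ is maximized on $[\tfrac14,4)$ at $\mu = \sqrt2$, with peak value $g(\sqrt2) = (1+\sqrt2)e^{-\sqrt2} \approx 0.587$. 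Combining with the negligible bottom-region contribution yields $\Pro(\compl{N_t}\mid N_{t-1}) \le 0.587\,(1+o(1)) + o(1) < 2/e$, which is exactly the desired $\Pro(N_t\mid N_{t-1}) \ge 1 - 2/e$.

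The only genuine subtlety---the main obstacle---is that, unlike the neighbouring Lemmas \ref{ismall} and \ref{ilarge}, here we cannot discard either tail: for $\mu$ near $1$ the event $X \le 2$ is almost certain, while for $\mu$ near $4$ the event $X \ge 1$ is almost certain, so a crude one-sided Markov or Chernoff estimate fails. One must keep the two masses $X=1$ and $X=2$ together and verify that their combined Poisson-type weight $g(\mu)$ stays below $2/e$ across the \emph{entire} window $\mu \in [\tfrac14,4)$. The comfortable gap between the peak $\approx 0.587$ and $2/e \approx 0.736$ is what makes the binomial-to-Poisson slack (the factor $(1-p_t)^{-2}$) and the stray bottom-region transmissions harmless.
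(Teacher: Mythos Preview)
Your argument is correct, but it does more work than the paper's because you take the definition of $N_t$ literally as ``$0$ or more than $2$ transmitters'', so that $\compl{N_t}=\{X=1\}\cup\{X=2\}$. The paper's own proof of this lemma, however, opens by asserting that $N_t$ holds iff the number of transmitters is \emph{not} $1$ (so the earlier ``more than $2$'' is evidently a slip for ``at least $2$''); it then simply notes that $\Pro(\text{exactly one transmission})=\Delta\,p_t(1-p_t)^{\Delta-1}\le 2/e$ for large $\Delta$ by a one-variable calculus check. Your Poisson-style estimate $g(\mu)=(\mu+\mu^2/2)e^{-\mu}$ optimised over $\mu\in[\tfrac14,4)$ is a genuinely different computation that handles the harder (mis-)stated event, and you also make explicit the negligible contribution of the $O(1)$ bottom-region nodes, which the paper glosses over. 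Both routes reach the same constant $2/e$; once the intended meaning of $N_t$ is used, the paper's is essentially a one-liner, while yours is the version that actually matches the definition as written.
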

\begin{proof}
To see this, note that $N_t$ occurs iff the number of nodes transmitting in the slot is
not $1$. So we need to upper bound the probability of exactly $1$ node transmitting.
Note that by Lemma \ref{xindependent}, the transmission probabilities at time $t$ are iid. Let this iid
probability be $p$. The probability of exactly on transmission is $\Delta p (1-p)^{\Delta-1}$ which can be seen to be upper bounded by $\frac2{e}$ for large enough $\Delta$ using calculus.
\end{proof}

\begin{proof}{[of Lemma \ref{fwboundlemma}]}
By the definition of $w_i$,
\begin{equation}
  \sum_{i=0}^r w_i = 1 \label{wone}
\end{equation}

Now it is elementary (using
bounds for geometric series) to check that for any $i$,

\begin{equation}
\sum_j f_i^j = \Theta(1) \ .  \label{fbound}
\end{equation}

We can see that,
$$\sum_j \sum_i f_i^j w_i = \sum_i w_i \left(\sum_j f_i^j\right) = \Theta\left(\sum_i w_i\right) = \Theta(1)$$
The first equality is rearrangement, the rest follows from Eqns. \ref{fbound}
and \ref{wone}.  The Lemma now follows from noting that the sum over $j$ has $\Theta(\log n)$ terms.
\end{proof}

%%\newpage
%\section{Statement of Algorithm with \\ {Carrier-Sense}}
%\label{app:newalg}
%
%
%

\end{document}